\colorlet{TTColor}{blue!80!}
\definecolor{hanblue}{rgb}{0.27, 0.42, 0.81}
\definecolor{frenchblue}{rgb}{0.0, 0.45, 0.73}
\definecolor{midnightblue}{rgb}{0.1, 0.1, 0.44}
\colorlet{SubTitleColor}{midnightblue}
\newtheorem{theorem}{Theorem}
\newtheorem{lemma}{Lemma}
\newtheorem{assumption}{Assumption}
\newtheorem{definition}{Definition}
\def\BibTeX{{\rm B\kern-.05em{\sc i\kern-.025em b}\kern-.08em
    T\kern-.1667em\lower.7ex\hbox{E}\kern-.125emX}}
\begin{document}
\title{Delay and Packet-Drop Tolerant Multi-Stage Distributed Average Tracking in Mean Square}
\author{Fei Chen, Changjiang Chen, Ge Guo, Changchun Hua, and Guanrong Chen
\thanks{F.~Chen and G.~Guo are with the State Key Laboratory of Synthetical Automation for Process Industries, Northeastern University, Shenyang, 110004, China and School of Control Engineering, Northeastern University at Qinhuangdao, Qinhuangdao, 066004, China. C.~Chen is with the Department of Automation, Xiamen University, Xiamen, 361005, China. C.~Hua is with the Institute of Electrical Engineering, Yanshan University, Qinhuangdao, 066004, China. G.~Chen is with the Department of Electrical Engineering, City University of Hong Kong, Hong Kong, China.} 
\thanks{A preliminary version of this work was presented in Chinese Control Conference \cite{chen2019discrete}.}
}

\maketitle

\begin{abstract}
This paper studies the distributed average tracking problem pertaining to a discrete-time linear time-invariant multi-agent network, which is subject to, concurrently, input delays, random packet-drops, and reference noise. The problem amounts to an integrated design of delay and packet-drop tolerant algorithm and determining the ultimate upper bound of the tracking error between agents' states and the average of the reference signals. The investigation is driven by the goal of devising a practically more attainable average tracking algorithm, thereby extending the existing work in the literature which largely ignored the aforementioned uncertainties. For this purpose,
a blend of techniques from Kalman filtering, multi-stage consensus filtering, and predictive control is employed, which gives rise to a simple yet comepelling distributed average tracking algorithm that is robust to initialization error and allows the trade-off between communication/computation cost and stationary-state tracking error. Due to the inherent coupling among different control components, convergence analysis is significantly challenging. Nevertheless, it is revealed that the allowable values of the algorithm parameters rely upon the maximal degree of an expected network, while the convergence speed depends upon the second smallest eigenvalue of the same network's topology. The effectiveness of the theoretical results is verified by a numerical example.

\end{abstract}

\begin{IEEEkeywords}
Distributed average tracking, reference noise, input delay, packet-drop, multi-agent system.
\end{IEEEkeywords}

\section{Introduction}

For a multi-agent plant operating through a network of devices, the capability of distributed average tracking (DAT), measured by the tracking error between each agent's state and the average of a set of reference signals via a distributed protocol, depends on the agent dynamics, the network topology, the class of control algorithms, as well as the reference signals. It has been recognized that when the agent dynamics, the reference signals, and the network topology are given, and the control algorithm has been designed, the reference noise, input delay, and network transmission failures will also lead to degrading control performance. This paper considers the DAT problem for discrete-time multi-agent systems, in which both the agents' states and the control inputs are updated in a discrete-time manner. The effect of linear time-invariant agent dynamics, noise-corrupted reference signals, and unreliable transmission networks subject to random packet-drop are investigated. Each agent's local input will be implemented via a multi-stage algorithm. The objective is to investigate what may affect the tracking error in this
setting, and whether it is possible to achieve practical DAT, i.e., the stationary-state tracking error can be made arbitrarily small, and how.

The capability of distributed average tracking is a significant attribute of multi-agent systems, which has proven useful for distributed sensor fusion \cite{spanos2005distributed,olfati2005consensus,bai2011distributed}, distributed optimization \cite{rahili2017distributed}, and multi-agent coordination \cite{yang2008multi,sun2007swarming,chen2017connection,chen2019minimum,mei2015distributed}. For single-integrator plants, a consensus algorithm and a proportional-integral algorithm are investigated respectively in \cite{spanos2005dynamic} and \cite{freeman2006stability}, wherein both algorithms could track the average of stationary references with zero tracking errors. The proportional-integral control offers additional robustness against initialization errors. Meanwhile, more advanced design methods have been exploited to track time-varying references \cite{bai2010robust}, sinusoid references with unknown frequencies \cite{bai2016two}, and arbitrary references with bounded derivatives \cite{chen2012distributed}. Recently, the study on DAT has been expanded to handle complicated agent dynamics, e.g., double-integrator dynamics \cite{ghapani2017distributed,chen2015distributed}, generic linear dynamics \cite{chen2017connection,zhao2017distributed}, and nonlinear dynamics \cite{chen2014distributed,zhao2018distributed}, with performance analysis \cite{van2015optimal,van2015exploiting,yuan2012decentralised}, privacy requirements \cite{kia2015dynamic}, and for balanced directed networks \cite{li2019dynamic,li2015dynamic}. By introducing a ``damping'' factor, the algorithm of \cite{montijano2014robust} ensures DAT with small errors while being robust against initialization errors. Inspired by the proportional algorithm, a multi-stage DAT algorithm was lately proposed in \cite{franceschelli2019multi} based upon a cascade of DAT filters, which is capable of achieving DAT with bounded errors. For more details on DAT, a recent tutorial is available \cite{kia2019tutorial}.

In spite of significant progress on DAT, the study on practical issues, such as delay and noise, is only to emerge \cite{moradian2018robustness}. Indeed, it is generally recognized, and intuitively clear, that the convergence of DAT algorithms can be constrained by transmission failures, input delays, as well as reference noise, which all likely result in negative effect on the convergence of the closed-loop system. For linear systems with small input delays, the control algorithm without delay compensation might still work, since linear systems possess a certain robustness margin to small delays \cite{niculescu2002delay}; yet the convergence will generically fail for relatively large delays. In practice, a reference signal might represent a target or a dynamic process, for which the measurement is inevitably corrupted by random noise. However, the effect of reference noise has been commonly ignored. Apart from that, since the communication network is shared by all agents, packet-drops ubiquitously exist, and therefore should be fully addressed, particularly when the data transmission rate is large. 




This paper proposes a practical DAT design which can concurrently tolerate input delays, random packet-drops, and reference noise. For this purpose, a blend of the techniques from multi-stage consensus filtering, predictive control, and Kalman filtering is employed. This work extends the existing work to a more realistic setting where the idealized assumptions, which are seldom possible in practice, are removed. To the best of the authors' knowledge, this work presents the first multi-stage design that takes reference noise, input delays, packet-drops all into consideration, which are perceived as main sources of control design difficulty for multi-agent systems. With this defining feature, the analysis reveals that an expected network topology plays a central role in ensuring the convergence of the proposed DAT algorithm, wherein the allowable values of the algorithm parameters rely upon the maximal degree of the expected network, while the convergence speed depends on the second smallest eigenvalue of the same network's topology. It should be noted that due to the additional algorithm components and their inherent couplings, the convergence analysis is significantly more challenging than the existing ones.




The rest of this paper is organized as follows. In Section~\ref{sec:prel}, notation and mathematical preliminaries are presented. In Section~\ref{sec:problDescr}, the problem is defined. In Section~\ref{sec:algDes}, the DAT algorithm is designed with the aid of Kalman filtering, predictive control, and multi-stage consensus filtering. The performance of the proposed algorithm is analyzed in Section~\ref{sec:convAnal}. Section~\ref{sec:sim} presents numerical examples to verify the theoretical results. Finally, Section~\ref{sec:concl} concludes the paper.

\section{Preliminaries}
\label{sec:prel}
\subsection{Notation}
Let $\mathbb{R}^+$ denote the set of real numbers and $\mathbb Z^+$ the set of positive integers.  Let $\mathbb R^n$ and $\mathbb R^{m \times n}$ denote respectively the set of $n$-dimensional real vectors and the set of $m \times n$ real matrices. Let $\boldsymbol {\mathrm I}_n \in \mathbb R^{n \times n}$ be the $n$-dimensional identity  matrix, $\boldsymbol 1_n \in \mathbb R^n$ the $n$-dimensional vector with all ones, and $\boldsymbol 1_{m \times n} \in \mathbb R^{m \times n}$ the $m \times n$ matrix with all ones.
For a vector $\boldsymbol x \in \mathbb R^{n}$, the norm used is defined as $\| \boldsymbol x \|_2 \triangleq (|x_1|^2+ \dots +|x_n|^2)^{1/2}$. The transpose of matrix $A$ is denoted by $A^{\mathrm T}$. The diagonal matrix with $a_i$ $(i=1, 2, \dots, n)$ being
its $i$th diagonal element is denoted by $\text{diag} \{ a_1, a_2,\dots, a_n \}$. Let $A^{-1}$ denote the inverse of matrix $A$. The smallest and largest eigenvalues of $A$ are given respectively by $\lambda_{ \min }(A)$ and $\lambda_{ \max }(A)$. Let $\| A \|_2 \triangleq \sqrt{\lambda_{ \max }(A^{\mathrm T}A)}$. It is assumed that all the vectors and matrices have compatible dimensions, which may not be shown if clear from the context.
For a set $S$, let $|S|$ denote its cardinality, i.e., the number of elements in $S$. Let $\mathbb E(\cdot)$ be the mathematical expectation and $\mathbb P(\cdot)$ be the probability function. The normal probability distribution is denoted by $N(\cdot)$.


\subsection{Graph Theory}
A graph is defined by $\mathcal G \triangleq (\mathcal V,\mathcal E)$, where $\mathcal V$ is the set of nodes and $\mathcal E \subseteq {\mathcal V \times \mathcal V}$ is the set of edges. A graph is undirected if $(i,j) \in \mathcal E \Longleftrightarrow (j,i) \in \mathcal E$ for all $i,j \in \mathcal V$. This paper considers undirected graphs. For node $i$, the set of its neighbors is defined as $\mathcal{N}_i = \{j\in \mathcal {V} \;|\; (j,i) \in \mathcal {E}\}$. The adjacency matrix of $\mathcal G$ is given by $ A=[a_{ij}] \in \mathbb R^{N \times N}$, where $a_{ij} = 1$ if $(j,i) \in \mathcal E$, and $a_{ij}=0$ otherwise. The degree of node $i$ is defined as $ d_i=  \sum_{j=1}^N a_{ij} =| \mathcal N_i|$. The maximum degree  of $\mathcal G$ is given by $d_{\max}=\max \{d_1, d_2, \dots, d_N \}$. The degree matrix is given by $D = \mathrm{diag}\{d_1, d_2, \dots, d_N\}\in \mathbb{R}^{N\times N}$. The Laplacian matrix of $\mathcal{G}$ is defined as $L = D-A \in \mathbb{R}^{N\times N}$. A graph is connected if, for any pair $\{i,j\}$, there exists a path connecting $i$ to $j$. For graphs $\mathcal G =(\mathcal V, \mathcal E)$ and $\mathcal G' =(\mathcal V, \mathcal E')$ with the same node set, their union is given by $\mathcal G \cup \mathcal G' \triangleq (\mathcal{V}, \mathcal E \cup \mathcal E')$.

\subsection{Observability and Controllability}

\begin{definition}[\hspace*{-4pt}\cite{karvonen2014stability}]\label{def1}
A matrix pair $[F(k), G(k)]$ with $k \in \mathbb{Z}^+$ is said to be completely observable if the observability Gramian
\begin{equation*}
O(k,l) := \sum_{i=l}^k \left( \prod_{j=l}^{i-1} F(j) \right)^{\mathrm T}G^\mathrm{T}(i)G(i) \left( \prod_{j=l}^{i-1}F(j) \right),
\end{equation*}
defined for $l < k $, is positive definite for some $k$ and $l$. Furthermore, the pair is said to be uniformly (completely) observable if there exists a positive integer $n$ and positive constants $\alpha_1$ and $\alpha_2$ such that
\begin{equation*}
0 \le \alpha_1 \boldsymbol {\mathrm I} \le O(k, k-n) \le \alpha_2 \boldsymbol {\mathrm I},
\end{equation*}
for all $k \ge n$.
\end{definition}
\begin{definition}[\hspace*{-4pt}\cite{karvonen2014stability}]\label{def2}
A matrix pair $[F(k), G(k)]$ with $k \in \mathbb{Z}^+$ is said to be completely controllable if the controllability Gramian
\begin{equation*}
C(k, l):= \sum_{i=l}^{k-1} \left( \prod_{j=i+1}^{k-1}F(j) \right) G(i)G^{\mathrm T}(i) \left( \prod_{j=i+1}^{k-1}F(j) \right)^\mathrm{T},
\end{equation*}
defined for $l < k$, is positive definite for some $k$ and $l$. Furthermore, the pair is said to be uniformly (completely) controllable if there exists a positive integer $n$ and positive constants $\beta_1$ and $\beta_2$ such that
\begin{equation*}
0 \le \beta_1 \boldsymbol {\mathrm I} \le C(k, k-n) \le \beta_2 \boldsymbol {\mathrm I},
\end{equation*}
for all $k \ge n$.
\end{definition}

\section{Problem Description}
\label{sec:problDescr}

Consider a network of $N \in \mathbb Z^+$ agents, labeled from $1$ to $N$. Agent $i$, $i=1,\dots,N$, follows the following discrete-time multi-stage dynamics:
\begin{align}\label{eq3-1}
x_i^1(k+1)&=x_i^1(k)+u_i^1(k-\tau) \nonumber \\
x_i^2(k+1)&=x_i^2(k)+u_i^2(k-\tau) \nonumber \\
& \; \; \vdots \nonumber \\
x_i^n(k+1)&=x_i^n(k)+u_i^n(k-\tau),
\end{align}
where $x_i^p(k)$ is the state of agent $i$ at stage $p$, $u_i^p(k-\tau)$ is the input of agent $i$ at stage $p$ subject to an input delay $\tau$, and $k \in \mathbb{Z}^+$ is the time variable. 
A graph $\mathcal G \triangleq (\mathcal{V}, \mathcal{E} )$ is used to describe the information flows among the agents, where $\mathcal{V} \triangleq \{ 1, \dots, N \}$ is the node set and $\mathcal E \triangleq \{ (i,j) \, | \, \text{node $i$ can share information with $j$}, \, i,j = 1, \dots, N \}$ is the edge set. Throughout this paper, it is assumed that graph $\mathcal{G}$ is connected. 

Because information is exchanged through the network, there usually exist packet-drops particularly when the data transmission rate is high. It is common and reasonable to delineate packet-drops by independent Bernoulli processes. Specifically, let $\theta_{ij}$ be a random variable indicating whether the transmission between two neighboring agents $i$ and $j$ is successful, i.e.,
\begin{align}\label{eq3-2}
	\theta_{ij}=
		\begin{cases}
			0, & \text{with probability $p_{ij}$,}\\
			1, & \text{with probability $1-p_{ij}$,}
		\end{cases}
\end{align}
where $0 \leq p_{ij}< 1$ is the packet-drop rate. For non-neighboring agents, $\theta_{ij}=0$ with probability~$1$. Due to the random packet-drops, the {\it de facto} information exchange network, denoted by $\widetilde{\mathcal{G}} \triangleq (\mathcal{V},\widetilde{\mathcal{E}})$, is by nature a random network. At each time instant, $\widetilde{\mathcal{G}}$ takes a value from the set $\{\mathcal{G}_1,\dots, \mathcal{G}_s\}$, $s=2^{|\mathcal{E}|}$, 
with the probability
\begin{equation}\label{eq3-3}
\mathbb P(\widetilde{\mathcal G} = \mathcal G_m)= \prod_{i=1,i<j}([\theta_{ij}]_m(1-p_{ij})+(1-[\theta_{ij}]_m)p_{ij}),
\end{equation}
where $m=1,\dots,s$ and  and $[\cdot]_m$ means that it takes values from the graph indexed by $m$.

Each agent has a reference signal $r_i$, which is governed by
\begin{equation}\label{eq3-4}
\begin{split}
&r_i(k+1)=r_i(k)+v_i(k)+w_i(k)\\
&z_i(k+1)=h_ir_i(k+1)+\vartheta_i(k+1),
\end{split}
\end{equation}
where $v_i(k) \in \mathbb R$ is the input, $z_i(k) \in \mathbb R$ is the measurement, and $h_i \in \mathbb R^+$ is the measurement gain. The process noise $w_i(k) \in \mathbb R$ and measurement noise $\vartheta_i(k) \in \mathbb R$ follow independent normal probability distributions, i.e.,
\begin{equation*}
\begin{split}
&w_i(k) \sim N(0,\phi_i(k))\\
&\vartheta_i(k) \sim N(0,\psi_i(k)),
\end{split}
\end{equation*}
where $\phi_i(k) \triangleq \mathbb E[w_i(k)^2]$ is the variance of the process noise, and $\psi_i(k) \triangleq \mathbb E[\vartheta_i(k)^2]$ is the variance of the measurement noise. The following assumption is made on the reference signals.
\begin{assumption}\label{ass2}
The reference signals satisfy the following properties:
\begin{enumerate}
	\item[(i)] the expectation $\mathbb E[r_i(k)]$ approaches a constant, as $k \to \infty$;
	\item[(ii)] $\rho_1 \le \phi_i(k) \le \rho_2 $ and $ ~\mu_1 \le \psi_i(k) \le \mu_2$,
where $\rho_1, \rho_2, \mu_1, \mu_2 \in \mathbb R^+$.
\end{enumerate}
\end{assumption}


The primary objective of this paper is to design distributed input sequence for the system \eqref{eq3-1} such that all agents can track the average of the $N$ noisy reference inputs in the sense that, for all $i=1, 2, \dots, N$,
\begin{equation}\label{eq3-5}
\limsup_{k \to \infty} \left \|\mathbb E \left[ x_i^n(k)-\frac {1}{N} \sum_{i=1}^Nr_i(k) \right] \right \|_2 \le \delta,
\end{equation}
where $\delta$ is a pre-desired constant, which can be arbitrarily small.

\section{Algorithm Design}
\label{sec:algDes}

In this section, the control input sequence is designed for the multi-stage system \eqref{eq3-1} to track the average signal of the noisy references \eqref{eq3-4}, which gives the following DAT algorithm: 
\begin{align}\label{eq4-1}
	u_i^1(k)&=-\epsilon \sum_{j=1}^N \theta_{ij} a_{ij} \left (\hat x_i^1(k|k-\tau)-\hat x_j^1(k|k-\tau) \right ) \nonumber \\
           &\quad +\alpha \left(\hat r_i(k|k-\tau)-\hat x_i^1(k|k-\tau) \right ) \nonumber \\
    u_i^2(k) &=-\epsilon \sum_{j=1}^N \theta_{ij} a_{ij} \left (\hat x_i^2(k|k-\tau)-\hat x_j^2(k|k-\tau) \right ) \nonumber \\
 		  & \quad +\alpha \left(\hat x_i^1(k|k-\tau)-\hat x_i^2(k|k-\tau) \right ) \nonumber \\
 		  &\hspace*{.2\textwidth} \vdots \nonumber \\
    u_i^n(k) &=-\epsilon \sum_{j=1}^N \theta_{ij} a_{ij} \left (\hat x_i^n(k|k-\tau)- \hat x_j^n(k|k-\tau) \right ) \nonumber \\
		  &\quad +\alpha \left(\hat x_i^{n-1}(k|k-\tau)-\hat x_i^n(k|k-\tau) \right),
\end{align}
where $\hat x_i^p(k|k-\tau)$ and $\hat r_i(k|k-\tau)$ are respectively the predicted states of agent $i$ and reference $i$ at time instant $k$ using the measurement information up to time instant $k-\tau$, and $\epsilon >0$ and $\alpha >0$ are two gain parameters to be designed.

The agent state predictor is given by
\begin{equation}\label{eq4-2}
\begin{split}
 &\hat x_i^1(k-\tau+1|k-\tau)\\
=\, &\hat x_i^1(k-\tau|k-\tau-1)\\
 &+k_x \left (x_i^1(k-\tau)-\hat x_i^1(k-\tau|k-\tau-1) \right)\\
 &+u_i^1(k-\tau)\\
 &\hat x_i^2(k-\tau+1|k-\tau)\\
=\, &\hat x_i^2(k-\tau|k-\tau-1)\\
 &+k_x \left (x_i^2(k-\tau)-\hat x_i^2(k-\tau|k-\tau-1) \right)\\
 &+u_i^2(k-\tau)\\
 & \hspace*{.2\textwidth} \vdots\\
 &\hat x_i^n(k-\tau+1|k-\tau)\\
=\, &\hat x_i^n(k-\tau|k-\tau-1)\\
 &+k_x \left (x_i^n(k-\tau)-\hat x_i^n(k-\tau|k-\tau-1) \right)\\
 &+u_i^n(k-\tau),\\
\end{split}
\end{equation}
where $k_x>0$ is the predictor gain, while the states of agent~$i$ from $k-\tau+2$ to $k$ are predicted by
\begin{equation}\label{eq4-3}
\begin{split}
 &\hat x_i^1(k-\tau+l|k-\tau)\\
=\,&\hat x_i^1(k-\tau+l-1|k-\tau)+u_i^1(k-\tau+l-1)\\
 &\hat x_i^2(k-\tau+l|k-\tau)\\
=\,&\hat x_i^2(k-\tau+l-1|k-\tau)+u_i^2(k-\tau+l-1)\\
 & \hspace*{.2\textwidth} \vdots\\
 &\hat x_i^n(k-\tau+l|k-\tau)\\
=\,&\hat x_i^n(k-\tau+l-1|k-\tau)+u_i^n(k-\tau+l-1),\\
 &~~~~~~~~~~~~~~~~~~~~~~~~~~~~~~~~~~~~~~~~~~~~l=2,\dots,\tau.
\end{split}
\end{equation}

Because the reference signals are subject to both input delay and noise, its design needs a combination of the predictive control and Kalman filtering techniques. Specifically, the following predictor for the reference signals is designed:
\begin{equation*}
\begin{split}
 &\hat r_i(k-\tau+1|k-\tau)\\
=\,&\hat r_i(k-\tau|k-\tau-1)\\
 &+k_r \left (\hat r_i(k-\tau)-\hat r_i(k-\tau|k-\tau-1) \right)\\
 &+v_i(k-\tau),\\
 &\hat r_i(k-\tau+l|k-\tau)\\
=\,&\hat r_i(k-\tau+l-1|k-\tau)+v_i(k-\tau+l-1),\\
 &~~~~~~~~~~~~~~~~~~~~~~~~~~~~~~~~~~~~~~~~~~l=2,\dots,\tau,
\end{split}
\end{equation*}
where $k_r>0$ is a predictor gain and $\hat r_i(k)$ is the estimate obtained via the Kalman filter
\begin{subequations}\label{eq4-4}
\begin{alignat}{1}
 \hat r_i^-(k+1)&=\hat r_i(k)+v_i(k)\label{eq4-4:a}\\
      p_i^-(k+1)&=p_i(k)+\phi_i(k)\label{eq4-4:b}\\
        k_i(k+1)&=\frac{h_ip_i^-(k+1)}{h_i^2p_i^-(k+1)+\psi_i(k+1)}\label{eq4-4:c}\\
   \hat r_i(k+1)&=\hat r_i^-(k+1)+k_i(k+1)(z_i(k+1)-h_i \hat r_i^-(k+1))\label{eq4-4:d}\\
        p_i(k+1)&=(1-k_i(k+1)h_i)p_i^-(k+1)\label{eq4-4:e},
\end{alignat}
\end{subequations}
in which $k_i(k) \in \mathbb R$ is the Kalman gain, $\hat r_i^-(k) \in \mathbb R$ and $\hat r_i(k) \in \mathbb R$ are, respectively, the priori and posteriori estimates of $r_i(k)$, $p_i^-(k) \triangleq \mathbb E \left [(r_i(k)-\hat r_i^-(k))^2 \right ]$ is the mean-squared priori estimate error, and $p_i(k) \triangleq \mathbb E \left [(r_i(k)-\hat r_i(k))^2 \right ]$ is the mean-squared posteriori estimate error. The initial states of the Kalman filter, $\hat r_i(0)$ and $p_i(0) \geq 0$, are chosen randomly.

\begin{figure}[htp!]
	\centering
	\hspace*{-40pt}\includegraphics[width=.4\textwidth]{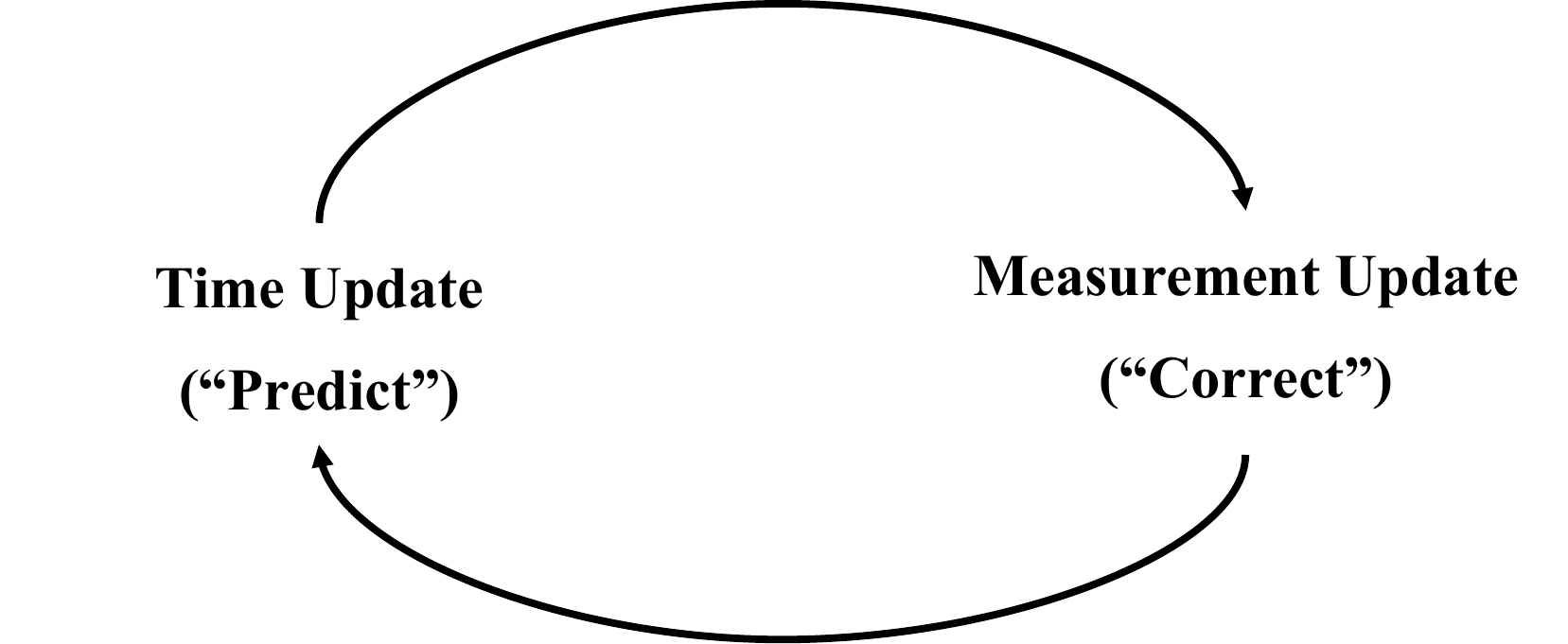}
	\caption{The estimate process of the discrete-time Kalman filter.}\label{fig:4-1}
\end{figure}
The estimate process alluded to above is delineated in Fig.~\ref{fig:4-1}. As shown, the Kalman filter performs two operations: time update and measurement update. The time update equations are \eqref{eq4-4:a} and \eqref{eq4-4:b}, while the measurement update equations are \eqref{eq4-4:c}--\eqref{eq4-4:e}. The time update equations ``predict" the state and estimate errors at time $k+1$ from those at time $k$, while the measurement update equations adjust the priori estimate by using the measurement $z_i(k)$ according to the Kalman gain $k_i(k)$, which is obtained by minimizing $p_i(k)$.

Substituting the input \eqref{eq4-1} into system \eqref{eq3-1} leads to the closed-loop system
\begin{equation}\label{eq4-5}
\begin{split}
 &x_i^1(k+1)\\
=\, &x_i^1(k)-\epsilon \sum_{j=1}^N \theta_{ij} a_{ij} \left (\hat x_i^1(k|k-\tau)-\hat x_j^1(k|k-\tau) \right )\\
           &+\alpha \left(\hat r_i(k|k-\tau)-\hat x_i^1(k|k-\tau) \right )\\
 &x_i^2(k+1)\\
=\, &x_i^2(k)-\epsilon \sum_{j=1}^N \theta_{ij} a_{ij} \left (\hat x_i^2(k|k-\tau)-\hat x_j^2(k|k-\tau) \right )\\
 &+\alpha \left(\hat x_i^1(k|k-\tau)-\hat x_i^2(k|k-\tau) \right )\\
 &\hspace*{.2\textwidth}\vdots\\
 &x_i^n(k+1)\\
=\,&x_i^n(k)-\epsilon \sum_{j=1}^N \theta_{ij} a_{ij} \left (\hat x_i^n(k|k-\tau)- \hat x_j^n(k|k-\tau) \right )\\
&+\alpha \left(\hat x_i^{n-1}(k|k-\tau)-\hat x_i^n(k|k-\tau) \right ),
\end{split}
\end{equation}
which can be written in a vector format as
\begin{equation}\label{eq4-6}
\begin{split}
\boldsymbol x^1(k+1)=\,&\boldsymbol x^1(k)-\epsilon \widetilde L \hat {\boldsymbol x}^1(k|k-\tau)\\
         &+\alpha \left (\hat {\boldsymbol r}(k|k-\tau)-\hat {\boldsymbol x}^1(k|k-\tau) \right )\\
\boldsymbol x^2(k+1)=\,&\boldsymbol x^2(k)-\epsilon \widetilde L \hat {\boldsymbol x}^2(k|k-\tau)\\
         &+\alpha \left (\hat {\boldsymbol x}^1(k|k-\tau)-\hat {\boldsymbol x}^2(k|k-\tau) \right )\\
         &\vdots \\
\boldsymbol x^n(k+1)=\,&\boldsymbol x^n(k)-\epsilon \widetilde L \hat {\boldsymbol x}^n(k|k-\tau)\\
         &+\alpha \left (\hat {\boldsymbol x}^{n-1}(k|k-\tau)-\hat {\boldsymbol x}^n(k|k-\tau) \right ),
\end{split}
\end{equation}
where
\begin{equation*}
\begin{split}
              \boldsymbol x^p(k)=\,&[{x_1^p(k)},{x_2^p(k)}, \dots, {x_N^p(k)}]^\mathrm T\\
\hat {\boldsymbol x}^p(k|k-\tau)=\,&[{\hat x_1^p(k|k-\tau)},{\hat x_2^p(k|k-\tau)},\dots, {\hat x_N^p(k|k-\tau)}]^\mathrm T\\
  \hat {\boldsymbol r}(k|k-\tau)=\,&[{\hat r_1(k|k-\tau)}, {\hat r_2(k|k-\tau)},\dots, {\hat r_N(k|k-\tau)}]^\mathrm T.
\end{split}
\end{equation*}
Here, $\widetilde L$ is a stochastic Laplacian matrix, changing within the possible set $\{ L_1, L_2, \dots, L_s \}$, where $L_m=\left [l_{ij} \right ]_m$ is the Laplacian matrix associated with $\mathcal G_m$, i.e.,
\begin{equation*}
[\l_{ij}]_m=\left \{
                    \begin{array}{lr}
                    \sum\limits_{q=1}^N [\theta_{iq}]_m a_{iq}, &i=j, \\
                    -[\theta_{ij}]_m a_{ij}, &i \ne j.
                    \end{array}
            \right.
\end{equation*}
For future use, let
\begin{equation*}
\begin{split}
[\overline l_{ij}]_m=\mathbb E[[l_{ij}]_m]=\left \{
                    \begin{array}{lr}
                    \sum\limits_{q=1}^N (1-p_{iq}) a_{iq}, &i=j, \\
                    -(1-p_{ij}) a_{ij}, &i \ne j.
                    \end{array}
                 \right.
\end{split}
\end{equation*}
The following gives a useful result regarding the expecation of the Laplacian matrix $\tilde{L}$. The result will be employed in the convergence analysis in the next section.
\begin{lemma}[\hspace*{-4pt}\cite{zhou2013dynamic}]\label{lem3}
For multi-agent system \eqref{eq4-5} with a connected communication network, the expected Laplacian matrix, $\overline L \triangleq \mathbb E[\widetilde L]$, has only one zero eigenvalue, i.e.,
\begin{equation*}
0=\lambda_{\overline L, 1} < \lambda_{\overline L, 2} \le \dots \le \lambda_{\overline L, N}.
\end{equation*}
\end{lemma}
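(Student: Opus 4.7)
The plan is to recognize that $\overline{L}$ is itself a weighted graph Laplacian and then invoke the standard spectral characterization of such matrices. First I would observe from the definition of $[\overline{l}_{ij}]_m$ that $\overline{L}$ can be written as $\overline{L} = \overline{D} - \overline{A}$, where $\overline{A} = [\overline{a}_{ij}]$ with $\overline{a}_{ij} = (1-p_{ij}) a_{ij}$ for $i \ne j$ and $\overline{a}_{ii} = 0$, and $\overline{D}$ is the diagonal matrix of weighted row sums of $\overline{A}$. Since $\mathcal{G}$ is undirected we have $a_{ij} = a_{ji}$, and since packet-drop rates satisfy $0 \le p_{ij} < 1$ symmetrically, the matrix $\overline{A}$ is symmetric with nonnegative entries, and $\overline{a}_{ij} > 0$ if and only if $a_{ij} = 1$. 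Hence $\overline{L}$ is precisely the Laplacian of a weighted, undirected graph $\overline{\mathcal{G}}$ whose underlying topology coincides with that of $\mathcal{G}$.

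Next I would verify the two structural facts needed. A direct calculation gives $\overline{L} \boldsymbol{1}_N = \boldsymbol{0}$ because every row of $\overline{L}$ sums to zero, so $0$ is an eigenvalue with eigenvector $\boldsymbol{1}_N$, establishing $\lambda_{\overline{L},1} = 0$. Symmetry of $\overline{L}$ together with the identity
\begin{equation*}
\boldsymbol{y}^{\mathrm{T}} \overline{L} \boldsymbol{y} = \tfrac{1}{2} \sum_{i,j=1}^N (1-p_{ij}) a_{ij} (y_i - y_j)^2 \ge 0
\end{equation*}
for every $\boldsymbol{y} \in \mathbb{R}^N$ shows that $\overline{L}$ is positive semidefinite, so all its eigenvalues are nonnegative and can be ordered as $0 = \lambda_{\overline{L},1} \le \lambda_{\overline{L},2} \le \dots \le \lambda_{\overline{L},N}$.

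It then remains to rule out $\lambda_{\overline{L},2} = 0$. Suppose $\overline{L} \boldsymbol{y} = 0$ for some $\boldsymbol{y}$. The quadratic form above then vanishes, which forces $y_i = y_j$ whenever $(i,j)$ is an edge of $\mathcal{G}$, since the corresponding weight $(1-p_{ij}) a_{ij}$ is strictly positive. Because $\mathcal{G}$ is connected by assumption, any two vertices are linked by a path and so all components of $\boldsymbol{y}$ must coincide, i.e., $\boldsymbol{y} \in \mathrm{span}\{\boldsymbol{1}_N\}$. Therefore the null space of $\overline{L}$ is one-dimensional, giving $\lambda_{\overline{L},2} > 0$ and completing the proof.

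The argument is essentially routine once the weighted-Laplacian interpretation is in place; the only subtle point is making sure that the weights remain strictly positive on every edge of $\mathcal{G}$, which is exactly what the standing hypothesis $p_{ij} < 1$ secures. That strict inequality is the main leverage point and, correspondingly, the one place where the proof would fail if lossless links were ever allowed to be completely blocked in expectation.
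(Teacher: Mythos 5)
Your proof is correct. The paper does not prove this lemma at all---it imports it by citation from \cite{zhou2013dynamic}---so there is no in-paper argument to compare against; your derivation is the standard weighted-Laplacian one (recognize $\overline L = \overline D - \overline A$ with edge weights $(1-p_{ij})a_{ij}>0$ precisely on the edges of $\mathcal G$, use the quadratic-form identity for positive semidefiniteness, and use connectedness of $\mathcal G$ to pin the kernel to $\mathrm{span}\{\boldsymbol 1_N\}$), and it correctly identifies $p_{ij}<1$ as the hypothesis that keeps every expected edge weight strictly positive.
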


The advantages of the multi-stage DAT system \eqref{eq4-5} are as follows: firstly, it does not involve integral control actions or input derivatives, thus exhibits robustness to initialization errors; secondly, the proposed multi-state scheme enables the possibility of making a trade-off among the communication/computation cost (i.e., the number of stages), the tracking error and the convergence time; thirdly, the proposed scheme takes input delay, packet-drops, and reference noise into consideration, making it more feasible for practical applications.

\section{Convergence Analysis}
\label{sec:convAnal}
In this section, the convergence of the proposed algorithm embedded in system \eqref{eq4-5} is analyzed. It is first to show that the Kalman filter \eqref{eq4-4} is stable.

\begin{lemma}\label{lem2}
If the second part of Assumption \ref{ass2} holds, then the Kalman filter \eqref{eq4-4} is stable, i.e.,
\begin{equation}\label{eq5-1}
\lim_{k \to \infty} \mathbb E[\hat r_i(k)- r_i(k)]=0.
\end{equation}
\end{lemma}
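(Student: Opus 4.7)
The plan is to reduce the mean estimation error to a scalar deterministic recursion and then bound its coefficients via the uniform observability/controllability framework of Definitions~\ref{def1}--\ref{def2}. First, I would set $\tilde r_i(k) \triangleq r_i(k) - \hat r_i(k)$ and subtract the Kalman predictor/corrector \eqref{eq4-4:a}--\eqref{eq4-4:e} from the reference dynamics \eqref{eq3-4}; after collecting terms, this should yield
\begin{equation*}
\tilde r_i(k+1) = \bigl(1 - k_i(k+1)h_i\bigr)\bigl(\tilde r_i(k) + w_i(k)\bigr) - k_i(k+1)\vartheta_i(k+1).
\end{equation*}
Because the gain $k_i(k)$ is determined entirely by the deterministic sequences $\phi_i$, $\psi_i$ and the initialization $p_i(0)$, and both driving noises are zero-mean, taking expectations gives the scalar homogeneous recursion $\mathbb E[\tilde r_i(k+1)] = (1-k_i(k+1)h_i)\mathbb E[\tilde r_i(k)]$. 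The lemma then reduces to showing $\prod_{j=1}^{k}(1-k_i(j)h_i) \to 0$ as $k\to\infty$.

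Next, I would verify the hypotheses of the classical Kalman-filter stability theorem used in \cite{karvonen2014stability} for this scalar reference model. With state-transition $F(k)=1$ and measurement matrix $G(k)=h_i>0$, Definition~\ref{def1} gives $O(k,k-n)=nh_i^2$, trivially satisfying the uniform bounds. For controllability with respect to process noise, taking the noise-input matrix as $\sqrt{\phi_i(k)}$ in Definition~\ref{def2} gives $C(k,k-n)=\sum_{j=k-n}^{k-1}\phi_i(j)\in[n\rho_1, n\rho_2]$ by the second part of Assumption~\ref{ass2}. Both Gramians therefore satisfy the uniform bounds, so the cited stability result applies. Moreover, \eqref{eq4-4:b} and \eqref{eq4-4:e} yield directly $p_i^-(k+2)\le \psi_i(k+1)/h_i^2+\phi_i(k+1)\le \mu_2/h_i^2+\rho_2 \triangleq \bar p$, while \eqref{eq4-4:b} gives $p_i^-(k)\ge\rho_1$ for $k\ge 1$. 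Plugging these bounds into \eqref{eq4-4:c}, I would obtain the uniform lower bound
\begin{equation*}
k_i(k+1)h_i \;\ge\; \frac{h_i^2\rho_1}{h_i^2\bar p + \mu_2} \;\triangleq\; \gamma,
\end{equation*}
with $\gamma\in(0,1)$ independent of $k$.

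The final step is then immediate: iterating the scalar recursion gives $|\mathbb E[\tilde r_i(k)]|\le(1-\gamma)^{k}|\mathbb E[\tilde r_i(0)]|\to 0$, which is \eqref{eq5-1}. The main obstacle I anticipate is the careful derivation of the two-sided bound on $p_i^-(k)$ from the variance bounds in Assumption~\ref{ass2}: the lower bound is immediate from \eqref{eq4-4:b}, but the upper bound genuinely uses the contraction in \eqref{eq4-4:e} induced by $\psi_i\le\mu_2$, so some algebra on the associated scalar Riccati recursion is unavoidable. Once those bounds are secured, the rest of the argument is the one-line geometric product estimate above, and invoking the Kalman stability theorem becomes essentially a bookkeeping step.
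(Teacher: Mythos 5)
Your proof is correct, and it goes further than the paper does. The paper's entire argument is two sentences: it observes that part (ii) of Assumption~\ref{ass2} makes the scalar pair $[1,\psi_i^{-1/2}(k)h_i]$ uniformly observable and the pair $[1,\phi_i^{1/2}(k)]$ uniformly controllable in the sense of Definitions~\ref{def1}--\ref{def2}, and then invokes the stability theorem of \cite{karvonen2014stability} as a black box. Your middle step (verifying the Gramian bounds) reproduces exactly that argument, modulo a harmless off-by-one in counting the terms of $O(k,k-n)$ and the fact that the paper normalizes the measurement matrix by $\psi_i^{-1/2}$ so that the $\psi_i$ bounds enter the observability Gramian rather than only the gain formula. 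What you add, and what the paper does not do, is a self-contained elementary proof: the explicit mean-error recursion $\mathbb E[\tilde r_i(k+1)]=(1-k_i(k+1)h_i)\mathbb E[\tilde r_i(k)]$ (which is correctly derived, using that $k_i$ is deterministic and the noises are zero-mean), the two-sided Riccati bounds $\rho_1\le p_i^-(k)\le \mu_2/h_i^2+\rho_2$, and the resulting uniform gain bound $k_i(k+1)h_i\ge\gamma>0$ giving geometric decay. This makes the citation genuinely redundant for the scalar case, which is a nice payoff; the only cosmetic issues are that your lower bound on the gain already follows from monotonicity of $x\mapsto x/(x+\psi)$ together with $p_i^-\ge\rho_1$ alone (so the upper bound $\bar p$ is not actually needed there, and the first one or two iterates where $p_i^-(k)\le\bar p$ may not yet hold are irrelevant anyway), and the geometric estimate should start from $k=1$ rather than $k=0$.
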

\vspace*{5pt}
\begin{proof}
	It follows from (ii) of Assumption~\ref{ass2} that the reference \eqref{eq3-4} is uniformly observable and uniformly controllable. That is, the matrix pair $[1, \psi_i^{-\frac{1}{2}}(k)h_i]$ is uniformly observable, and the matrix pair $[1, \phi_i^{\frac{1}{2}}(k)]$ is uniformly controllable. The stability result \eqref{eq5-1} then follows immediately from \cite{karvonen2014stability}.
\end{proof}


In what follows, the performance of the multi-stage DAT system \eqref{eq4-5} is analyzed. Let
\begin{equation*}
\boldsymbol r^* \triangleq \lim_{k \to \infty} \mathbb E[\boldsymbol r(k)]=[r_1^*, r_2^*, \dots, r_N^*]^\mathrm T, ~\boldsymbol r^* \in \mathbb R^N,
\end{equation*}
where $r_i^* \triangleq \lim_{k \to \infty} \mathbb E[r_i(k)], i={1, \dots , N}$. Note that $\boldsymbol{r}^*$ is well defined due to Assumption~\ref{ass2}. The following result characterizes agents' stationary states in terms of $\boldsymbol{r}^*$.

\begin{lemma}\label{lem4}
For the multi-agent system~\eqref{eq4-5} with a connected communication network, if Assumption~\ref{ass2} holds, $\epsilon \in (0, \frac {1}{2 \overline d_{\max}})$, $\alpha \in (0,1-\epsilon \overline d_{\max})$, and $k_x, k_r \in (0, 1)$, then $\boldsymbol x^{p,*} \triangleq \lim_{k \to \infty} \mathbb E[\boldsymbol x^p(k)]$ exists and is given by
\begin{equation*}
\boldsymbol x^{p,*}=(\alpha \boldsymbol {\mathrm I}+\epsilon \overline L)^{-p}\alpha^p \boldsymbol r^*.
\end{equation*}
\end{lemma}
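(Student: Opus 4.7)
My plan is to take expectations of the closed-loop recursion~\eqref{eq4-6}, derive a stationary fixed-point equation, and solve it recursively in $p$. The first key observation is that the random Laplacian $\widetilde L$ realized at step $k$ is independent of $\hat{\boldsymbol x}^p(k|k-\tau)$, because the latter is measurable with respect to data collected no later than $k-\tau$. Consequently, $\mathbb E[\widetilde L\,\hat{\boldsymbol x}^p(k|k-\tau)] = \overline L\,\mathbb E[\hat{\boldsymbol x}^p(k|k-\tau)]$, which is exactly what brings the expected Laplacian into the stationary relation.

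Next I would reconcile the predictor output with the true state in the mean. Unrolling~\eqref{eq4-2}--\eqref{eq4-3} yields
\begin{equation*}
\hat x_i^p(k|k-\tau) = (1-k_x)\hat x_i^p(k-\tau|k-\tau-1) + k_x x_i^p(k-\tau) + \sum_{j=0}^{\tau-1} u_i^p(k-\tau+j),
\end{equation*}
while~\eqref{eq3-1} gives $x_i^p(k) = x_i^p(k-\tau) + \sum_{j=0}^{\tau-1} u_i^p(k-2\tau+j)$. Taking expectation of~\eqref{eq3-1} and passing to the limit (granting its existence) forces $\mathbb E[u_i^p(k)] \to 0$, so the two expressions collapse and $\lim_{k\to\infty} \mathbb E[\hat x_i^p(k|k-\tau)] = \lim_{k\to\infty} \mathbb E[x_i^p(k)]$ as long as $k_x\in(0,1)$. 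A parallel computation, combined with Lemma~\ref{lem2} and Assumption~\ref{ass2}(i), shows $\lim_{k\to\infty} \mathbb E[\hat r_i(k|k-\tau)] = r_i^*$.

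With these two ingredients in hand, taking expectations of~\eqref{eq4-6} and letting $k\to\infty$ converts the recursion into the fixed-point cascade
\begin{equation*}
(\alpha\boldsymbol{\mathrm I} + \epsilon\overline L)\,\boldsymbol x^{p,*} = \alpha\,\boldsymbol x^{p-1,*},\qquad p=1,\ldots,n,
\end{equation*}
with the convention $\boldsymbol x^{0,*}:=\boldsymbol r^*$. Since $\alpha>0$ and $\overline L\succeq 0$ by Lemma~\ref{lem3}, the matrix $\alpha\boldsymbol{\mathrm I}+\epsilon\overline L$ is positive definite and hence invertible, and an induction on $p$ then yields the claimed formula $\boldsymbol x^{p,*} = \alpha^p(\alpha\boldsymbol{\mathrm I}+\epsilon\overline L)^{-p}\boldsymbol r^*$.

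The genuinely delicate piece is verifying that $\boldsymbol x^{p,*}$ exists, so that the stationary analysis above is not vacuous. For $p=1$, the expected closed-loop dynamics form a delay-augmented linear system driven by the already-convergent signal $\alpha\,\mathbb E[\hat{\boldsymbol r}(k|k-\tau)]\to \alpha\boldsymbol r^*$, whose free dynamics reduce (modulo exponentially decaying predictor and delay transients) to the iteration matrix $\boldsymbol{\mathrm I}-\alpha\boldsymbol{\mathrm I}-\epsilon\overline L$. A Gershgorin-disc estimate, using $\|\overline L\|_\infty\le 2\overline d_{\max}$ together with the assumed parameter ranges $\epsilon\in(0,1/(2\overline d_{\max}))$ and $\alpha\in(0,1-\epsilon\overline d_{\max})$, places every eigenvalue of this matrix strictly inside the unit disk, while $k_x,k_r\in(0,1)$ renders the predictor subsystems Schur-stable. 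Knitting the delay loop, the two predictors, and the consensus term into a single Schur-stable joint iteration is the main technical chore; once that is done, induction on $p$ propagates existence to all stages, since each stage is driven in the mean by the already-convergent $(p-1)$-stage output.
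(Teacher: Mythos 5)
Your overall route is the same as the paper's: take expectations, use independence of $\widetilde L$ from the delayed predictions to bring in $\overline L$, locate the eigenvalues of $(1-\alpha)\boldsymbol{\mathrm I}-\epsilon\overline L$ in $(0,1)$ via Gershgorin under the stated parameter ranges, pass to the fixed-point equation $(\alpha\boldsymbol{\mathrm I}+\epsilon\overline L)\boldsymbol x^{p,*}=\alpha\boldsymbol x^{p-1,*}$, and induct on $p$. However, there is a genuine gap at exactly the point where the lemma has content: the existence of the limit. Your reconciliation of the predictor with the true state rests on ``taking expectation of \eqref{eq3-1} and passing to the limit (granting its existence) forces $\mathbb E[u_i^p(k)]\to 0$, so the two expressions collapse,'' which presupposes the convergence you are trying to prove; and your closing paragraph explicitly defers ``knitting the delay loop, the two predictors, and the consensus term into a single Schur-stable joint iteration'' as a chore to be done later. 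That chore \emph{is} the proof.

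The device you are missing is the one the paper uses to make the delay disappear exactly rather than asymptotically: because the same control increment enters both the true state and the one-step predictor in \eqref{eq4-2}, the one-step prediction error $e_i^p(k)=x_i^p(k)-\hat x_i^p(k|k-1)$ evolves autonomously as $e_i^p(k+1)=(1-k_x)e_i^p(k)$, and unrolling \eqref{eq4-3} then gives the exact identity $\hat x_i^p(k|k-\tau)=x_i^p(k)-e_i^p(k-\tau+1)$ (and likewise $\hat r_i(k|k-\tau)=\hat r_i(k)-e_i'(k-\tau+1)$ with ratio $1-k_r$). Substituting these identities turns the delay-augmented expected closed loop into the finite-dimensional, block upper-triangular difference system \eqref{eq5-14}--\eqref{eq5-15}, whose spectrum is precisely the union of the spectra of $(1-\alpha)\boldsymbol{\mathrm I}-\epsilon\overline L$, $(1-k_x)\boldsymbol{\mathrm I}$, and $(1-k_r)\boldsymbol{\mathrm I}$ --- no further ``knitting'' is needed, and Schur stability (hence existence of $\boldsymbol x^{p,*}$) follows immediately from the eigenvalue bounds you already computed. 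Without this identity (or an equivalent explicit augmented-state construction), your existence argument is an assertion, and your collapse of $\mathbb E[\hat x_i^p(k|k-\tau)]$ onto $\mathbb E[x_i^p(k)]$ is circular. Everything downstream of that point in your write-up is fine.
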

\vspace*{5pt}

\begin{proof}
Without loss of generality, only the first-stage state is analyzed here. The proof for the other stages is similar and is hence omitted. 

Replacing $k-\tau$ with $k$ in \eqref{eq4-2} yields
\begin{equation}\label{eq5-2}
\begin{split}
\hat x_i^1(k+1|k)&=\hat x_i^1(k|k-1)+k_x(x_i^1(k)\\
                 &\quad -\hat x_i^1(k|k-1))+u_i^1(k).
\end{split}
\end{equation}
Let $e_i^1(k)=x_i^1(k)-\hat x_i^1(k|k-1)$. Subtracting \eqref{eq5-2} from \eqref{eq4-5} leads to
\begin{equation}\label{eq5-3}
e_i^1(k+1)=e_i^1(k)-k_xe_i^1(k)=(1-k_x)e_i^1(k).
\end{equation}
Applying \eqref{eq4-3} recursively gives
\begin{equation}\label{eq5-4}
\begin{split}
 &\hat x_i^1(k|k-\tau)\\
=\,&\hat x_i^1(k-1|k-\tau)+u_i^1(k-1)\\
=\,&\hat x_i^1(k-\tau+1|k-\tau)+\sum_{l=2}^{\tau}u_i^1(k-\tau+l-1)\\
=\,&\hat x_i^1(k-\tau|k-\tau-1)+\sum_{l=2}^{\tau}u_i^1(k-\tau+k-1)\\
 &+u_i^1(k-\tau)+k_x(x_i^1(k-\tau)\\
 &-\hat x_i^1(k-\tau|k-\tau-1))\\
=\,&\hat x_i^1(k-\tau|k-\tau-1)+\sum_{l=1}^{\tau}u_i^1(k-\tau+l-1)\\
 &+k_x(x_i^1(k-\tau)-\hat x_i^1(k-\tau|k-\tau-1)).
\end{split}
\end{equation}
Similarly, applying \eqref{eq4-5} recursively yields
\begin{equation}\label{eq5-5}
\begin{split}
x_i^1(k)=x_i^1(k-\tau)+\sum_{l=1}^{\tau}u_i^1(k-\tau+l-1).
\end{split}
\end{equation}
Subtracting \eqref{eq5-5} from \eqref{eq5-4} leads to
\begin{equation}\label{eq5-6}
\begin{split}
 &\hat x_i^1(k|k-\tau)\\
=\,&x_i^1(k)+k_x(x_i^1(k-\tau)-\hat x_i^1(k-\tau|k-\tau-1))\\
 &+\hat x_i^1(k-\tau|k-\tau-1)-x_i^1(k-\tau)\\
=\,&x_i^1(k)+k_xe_i^1(k-\tau)-e_i^1(k-\tau)\\
=\,&x_i^1(k)+(k_x-1)e_i^1(k-\tau).
\end{split}
\end{equation}
It follows from \eqref{eq5-3} and \eqref{eq5-6} that
\begin{equation}\label{eq5-7}
\hat x_i^1(k|k-\tau)=x_i^1(k)-e_i^1(k-\tau+1).
\end{equation}
For the reference signals, define $e'_i(k)=\hat r_i(k)-\hat r_i(k|k-1)$. It then follows that
\begin{equation}\label{eq5-8}
e'_i(k+1)=e'_i(k)-k_r e'_i(k)=(1-k_r)e'_i(k),
\end{equation}
and
\begin{equation}\label{eq5-9}
\hat r_i(k|k-\tau)=\hat r_i(k)-e'_i(k-\tau+1).
\end{equation}

Using \eqref{eq5-7} and \eqref{eq5-9}, it follows from \eqref{eq4-5} that
\begin{equation}\label{eq5-10}
\begin{split}
 &x_i^1(k+1)\\
=\, &x_i^1(k)-\epsilon \sum_{j=1}^N \theta_{ij} a_{ij} \left (x_i^1(k)-x_j^1(k) \right )\\
 &+\epsilon \sum_{j=1}^N \theta_{ij} a_{ij} \left(e_i^1(k-\tau+1)-e_j^1(k-\tau+1)\right)\\
 &+\alpha \left(\hat r_i(k)-e'_i(k-\tau+1) \right )\\
 &-\alpha \left (x_i^1(k)-e_i^1(k-\tau+1) \right ).
\end{split}
\end{equation}
Since $x_i^1(k)$ and $e_i^1(k-\tau+1)$ are independent of $\theta_{ij}$ at time $k$, taking mathematical expectation on both sides of \eqref{eq5-10} yields
\begin{equation}\label{eq5-11}
\begin{split}
 &\mathbb E[x_i^1(k+1)]\\
=\, &\mathbb E[x_i^1(k)]-\epsilon \sum_{j=1}^N \mathbb E[\theta_{ij}] a_{ij} \mathbb E \left [x_i^1(k)-x_j^1(k) \right ] \\
 &+\epsilon \sum_{j=1}^N \mathbb E[\theta_{ij}] a_{ij} \mathbb E \left [e_i^1(k-\tau+1)-e_j^1(k-\tau+1) \right]\\
 &+\alpha \left( \mathbb E[\hat r_i(k)]-\mathbb E[e'_i(k-\tau+1)] \right )\\
 &-\alpha \left (\mathbb E[x_i^1(k)]-\mathbb E[e_i^1(k-\tau+1)] \right ).
\end{split}
\end{equation}
It then follows from \eqref{eq3-2} that $\mathbb E[\theta_{ij}]=1-p_{ij}$, which together with \eqref{eq5-11} leads to
\begin{equation*}
\begin{split}
 &\mathbb E[x_i^1(k+1)]\\
=\, &\mathbb E[x_i^1(k)]-\epsilon \sum_{j=1}^N (1-p_{ij}) a_{ij} \left (\mathbb E [x_i^1(k)]-\mathbb E[x_j^1(k)] \right ) \\
 &+\epsilon \sum_{j=1}^N (1-p_{ij}) a_{ij} \left (\mathbb E [e_i^1(k-\tau+1)]-\mathbb E[e_j^1(k-\tau+1)] \right)\\
 &+\alpha \left( \mathbb E[\hat r_i(k)]-\mathbb E[e'_i(k-\tau+1)] \right )\\
 &-\alpha \left (\mathbb E[x_i^1(k)]-\mathbb E[e_i^1(k-\tau+1)] \right ).
\end{split}
\end{equation*}
Let
\begin{equation*}
\begin{split}
           \mathbb E[\Delta x_i^1(k)] &\triangleq \mathbb E[x_i^1(k)]-\mathbb E[x_i^1(k-1)]\\
           \mathbb E[\Delta e_i^1(k)] &\triangleq \mathbb E[e_i^1(k)]-\mathbb E[e_i^1(k-1)]\\
        \mathbb E[\Delta \hat r_i(k)] &\triangleq \mathbb E[\hat r_i(k)]-\mathbb E[\hat r_i(k-1)]\\
  \mathbb E[\Delta e'_i(k)] &\triangleq \mathbb E[e'_i(k)]-\mathbb E[e'_i(k-1)].\\
\end{split}
\end{equation*}
It follows that
\begin{equation}\label{eq5-12}
\begin{split}
 &\mathbb E[\Delta x_i^1(k+1)]\\
=\,&\mathbb E[\Delta x_i^1(k)]-\epsilon \sum_{j=1}^{N} (1-p_{ij}) a_{ij}(\mathbb E[\Delta x_i^1(k)]-\mathbb E[\Delta  x_j^1(k)])\\
 &-\epsilon \sum_{j=1}^{N} (1-p_{ij}) a_{ij}(\Delta \mathbb E[e_j^1(k-\tau+1)]\\
 &-\mathbb E[\Delta e_i^1(k-\tau+1)])\\
 &+\alpha \mathbb E[\Delta \hat r_i(k)]-\alpha \mathbb E[\Delta e'_i(k-\tau+1)]\\
 &-\alpha \mathbb E[\Delta x_i^1(k)]+\alpha \mathbb E[\Delta e_i^1(k-\tau+1)].
\end{split}
\end{equation}


Define
\begin{equation*}
\begin{split}
       \mathbb E[\Delta \boldsymbol x^1(k)]=[\mathbb E[\Delta x_1^1(k)], \mathbb E[\Delta x_2^1(k)], \dots, \mathbb E[\Delta x_N^1(k)]]^\mathrm T\\
\mathbb E[\Delta {\boldsymbol e}^1(k-\tau+1)]=[\mathbb E[\Delta e_1^1(k-\tau+1)], \mathbb E[\Delta e_2^1(k-\tau+1)], \\
                               \qquad  \dots, \mathbb E[\Delta e_N^1(k-\tau+1)]]^\mathrm T \\
    \mathbb E[\Delta \hat {\boldsymbol r}(k)]=[\mathbb E[\Delta \hat r_1(k)], \mathbb E[\Delta \hat r_2(k)], 
                                \dots, \mathbb E[\Delta \hat r_N(k)]]^\mathrm T\\
\mathbb E[\Delta \boldsymbol {e'}(k-\tau+1)]=[\mathbb E[\Delta e'_1(k-\tau+1)], 
                                \mathbb E[\Delta e'_2(k-\tau+1)], \\
                                 \qquad \dots, \mathbb E[\Delta e'_N(k-\tau+1)]]^\mathrm T.
\end{split}
\end{equation*}
Eq.~\eqref{eq5-12} can be written as
\begin{equation}\label{eq5-13}
\begin{split}
\mathbb E[\Delta \boldsymbol x^1(k+1)]=&\,[(1-\alpha)\boldsymbol {\mathrm I}-\epsilon \overline L] \mathbb E[\Delta \boldsymbol x^1(k)]\\
                &+(\alpha \boldsymbol {\mathrm I} +\epsilon \overline L)\mathbb E[\Delta \boldsymbol {e}^1(k-\tau+1)]\\
                &+\alpha \mathbb E[\Delta \hat {\boldsymbol r}(k)]-\alpha \mathbb E[\Delta \boldsymbol {e'}(k-\tau+1)].
\end{split}
\end{equation}
Using \eqref{eq5-3} and \eqref{eq5-8}, Eq.~\eqref{eq5-13} can be rewritten in a compact form as
\begin{equation}\label{eq5-14}
\begin{split}
 &\left [
\begin{array}{ccc}
\mathbb E[\Delta \boldsymbol x^1(k+1)]\\
\mathbb E[\Delta \boldsymbol {e}^1(k-\tau+2)]\\
\mathbb E[\Delta \boldsymbol {e'}(k-\tau+2)]
\end{array}
\right ]\\
=&M \left [
\begin{array}{ccc}
\mathbb E[\Delta \boldsymbol x^1(k)]\\
\mathbb E[\Delta \boldsymbol {e}^1(k-\tau+1)]\\
\mathbb E[\Delta \boldsymbol {e'}(k-\tau+1)]
\end{array}
\right ]
+\alpha
\left [
\begin{array}{ccc}
\mathbb E[\Delta \hat {\boldsymbol r}(k)]\\
0\\
0
\end{array}
\right ],
\end{split}
\end{equation}
where
\begin{equation}\label{eq5-15}
\begin{split}
M \triangleq
\left [
\begin{array}{ccc}
(1-\alpha)\boldsymbol {\mathrm I}-\epsilon \overline L& \alpha \boldsymbol {\mathrm I}+\epsilon \overline L&-\alpha\\
0&(1-k_x)\boldsymbol {\mathrm I}&0\\
0&0&(1-k_r)\boldsymbol {\mathrm I}
\end{array}
\right ].
\end{split}
\end{equation}
Taking mathematical expectation on both sides of \eqref{eq3-4} leads to
\begin{equation*}
\mathbb E[r_i(k+1)] = \mathbb E[r_i(k)]+\mathbb E[v_i(k)].
\end{equation*}
Based on Assumption \ref{ass2} and \eqref{eq5-1}, one has
\begin{equation}\label{eq5-16}
\lim_{k \to \infty} \mathbb E[\hat r_i(k)] \to r_i^*(k),
\end{equation}
which further leads to $\lim_{k \to \infty} \mathbb E[\Delta \hat {\boldsymbol r}(k)] \to 0$.

Due to \eqref{eq5-16}, as $k \to \infty$, Eq.~\eqref{eq5-14} becomes
\begin{equation}\label{eq5-17}
\begin{split}
 &\left [
\begin{array}{ccc}
\mathbb E[\Delta \boldsymbol x^1(k+1)]\\
\mathbb E[\Delta \boldsymbol {e}^1(k-\tau+2)]\\
\mathbb E[\Delta \boldsymbol {e'}(k-\tau+2)]
\end{array}
\right ]
=
M
\left [
\begin{array}{ccc}
\mathbb E[\Delta \boldsymbol x^1(k)]\\
\mathbb E[\Delta \boldsymbol {e}^1(k-\tau+1)]\\
\mathbb E[\Delta \boldsymbol {e'}(k-\tau+1  )]
\end{array}
\right ].
\end{split}
\end{equation}
It follows that the system~\eqref{eq5-17} is asymptotically stable if and only if all the eigenvalues of the matrix $M$ lie within the unit circle. Furthermore, \eqref{eq5-15} suggests that the eigenvalues of the matrix $M$ coincide with the  eigenvalues of the matrices $(1-\alpha)\boldsymbol {\mathrm I}-\epsilon \overline L$, $(1-k_x)\boldsymbol {\mathrm I}$ and $(1-k_r)\boldsymbol {\mathrm I}$.

For the matrices $(1-k_x)\boldsymbol {\mathrm I}$ and $(1-k_r)\boldsymbol {\mathrm I}$, by choosing $k_x, k_r \in (0, 1)$,  all the eigenvalues of the two matrices are within the unit circle. For the matrix $(1-\alpha)\boldsymbol {\mathrm I}-\epsilon \overline L$, it follows from \eqref{eq3-3} that $\overline L= \sum_{m=1}^s \mathbb P(\widetilde{\mathcal G}=\mathcal G_m) L_m$. Denote the $i$th eigenvalue of $\overline L$ by $\lambda_{\overline L, i}$. According to the Gerschgorin disc theorem, the expected Laplacian matrix has all its eigenvalues located within $[0,2 \overline d_{\max}]$, where $\overline d_{\max}$ is the maximum degree of the expected graph $\overline {\mathcal G}$. It thus follows that
\begin{equation*}
0=\lambda_{\overline L, 1} < \lambda_{\overline L, 2} \le \dots \le \lambda_{\overline L, N} \le 2 \overline d_{\max}.
\end{equation*}
Since $\overline L$ is a symmetric matrix, the matrix $(1-\alpha) \boldsymbol {\mathrm I}_N-\epsilon \overline L$ is  symmetric. Consequently, all the eigenvalues are real and are given by
\begin{equation}\label{eq5-18}
\lambda_i=1-\alpha-\epsilon \lambda_{\overline L, i}.
\end{equation}
Since $\alpha \in (0,1-\epsilon \overline d_{\max})$ and $\epsilon \in (0, \frac{1}{2 \overline d_{\max}})$, it follows that the eigenvalues of \eqref{eq5-18} satisfy $\lambda_i \in (0,1), ~i \in \{1, \dots, N \}$.
As all eigenvalues of $M$ are within the unit circle, the dynamical system~\eqref{eq5-17} is asymptotically stable.

Let $\boldsymbol x^{1, *}$ denote the expected steady state. That is,
\begin{equation*}
\boldsymbol x^{1,*} \triangleq \lim_{k \to \infty} \mathbb E[\boldsymbol x^1(k)]= [x_1^{1,*}, x_2^{1,*}, \dots, x_N^{1,*}]^T, 
\end{equation*}
where $x_i^{1, *} \triangleq \lim_{K \to \infty} \mathbb E[x_i^1(k)], i=1, 2, \dots, N$, represents the expected steady state at stage $1$ of agent $i$. As the asymptotic convergence of \eqref{eq5-17} is ensured, $\boldsymbol x^{1, *}$ is well defined.
The system~\eqref{eq5-17} is asymptotically stable, i.e., $\mathbb E[\Delta \boldsymbol {e}^1(k-\tau+2)] \to 0$, $\mathbb E[\Delta \boldsymbol {e'}(k-\tau+2)] \to 0$, $\mathbb E[\Delta \boldsymbol x^1(k)] \to 0$ as $k \to \infty$. As a result, 
\begin{equation}\label{eq5-19}
\lim_{k \to \infty} \mathbb E[\boldsymbol x^1(k+1)] \to \mathbb E[\boldsymbol x^1(k)] \to \boldsymbol x^{1, *}.
\end{equation}
It follows from \eqref{eq5-6} and \eqref{eq5-8} that $\mathbb E[e_i^1(k)] \to 0$ and $\mathbb E[e'_i(k)] \to 0$ as $k \to \infty$, which further leads \eqref{eq5-7} and \eqref{eq5-9} to
\begin{equation}\label{eq5-20}
\begin{split}
\mathbb E[\hat x_i^1(k|k-\tau)] &\to \mathbb E[x_i^1(k)]\\
  \mathbb E[\hat r_i(k|k-\tau)] &\to \mathbb E[\hat r_i(k)], ~\text{as $k \to \infty$}.
\end{split}
\end{equation}
Eq.~\eqref{eq4-6} gives
\begin{equation}\label{eq5-21}
\begin{split}
\mathbb E[\boldsymbol x^1(k+1)]=&\mathbb E[\boldsymbol x^1(k)]-\epsilon  \overline L \mathbb E[\hat {\boldsymbol x}^1(k|k-\tau)]\\
                    &+ \alpha \left (\mathbb E[\hat {\boldsymbol r}(k)]-\mathbb E[\hat {\boldsymbol x}^1(k|k-\tau)] \right ).
\end{split}
\end{equation}
Using \eqref{eq5-16}, \eqref{eq5-19} and \eqref{eq5-20}, it follows from \eqref{eq5-21} that
\begin{equation*}
\begin{split}
\boldsymbol x^{1,*}=\boldsymbol x^{1,*}-\epsilon  \overline L \boldsymbol x^{1,*}+ \alpha \left (\boldsymbol r^*-\boldsymbol x^{1,*} \right ).
\end{split}
\end{equation*}
The expected steady-state equilibrium at the first stage is then given by
\begin{equation}\label{eq5-22}
\boldsymbol x^{1,*}=(\alpha \boldsymbol {\mathrm I}+\epsilon \overline L)^{-1} \alpha \boldsymbol r^*.
\end{equation}
For the remaining $n-1$ stages, it can be shown similarly that
\begin{equation}\label{eq5-23}
\boldsymbol x^{p,*}=(\alpha \boldsymbol {\mathrm I}+\epsilon \overline L)^{-1} \alpha \boldsymbol x^{p-1,*},
\end{equation}
where $\boldsymbol x^{p,*} \triangleq [x_1^{p,*}, x_2^{p,*}, \dots, x_N^{p,*}]^\mathrm T \in \mathbb R^N, p=1,2, \dots, n$, represents the expected steady state at stage $p$ and $x_i^{p,*} \triangleq \lim_{k \to \infty} \mathbb E[x_i^p(k)], i=1, 2, \dots, N$. Combining \eqref{eq5-22} and \eqref{eq5-23} gives
\begin{equation*}
\boldsymbol x^{p,*}=(\alpha \boldsymbol {\mathrm I}+\epsilon \overline L)^{-p}\alpha^p \boldsymbol r^*.
\end{equation*}
The proof is thus completed.
\end{proof}

The main result of this section is the following theorem.
\begin{theorem}
\label{theor:conv}
For the system~\eqref{eq4-5}  with a connected communication network, if Assumption~\ref{ass2} holds, $\epsilon \in (0, \frac {1}{2 \overline d_{\max}})$, $\alpha \in (0, 1-\epsilon \overline d_{\max})$, and $k_x, k_r \in (0, 1)$, then
\begin{equation*}
\limsup_{k \to \infty} \left \| \bar r^* \boldsymbol 1_N-\boldsymbol x^{n,*} \right \|_2 \le \left ( \frac{\alpha}{\alpha+\epsilon \lambda_{\overline L, 2}} \right)^n \|\widetilde {\boldsymbol r}^*\|_2,
\end{equation*}
where $\widetilde {\boldsymbol r}^* \triangleq (\boldsymbol r^*-\bar r^* \boldsymbol 1_N)$ and $\bar r^* \triangleq \frac {1} {N} \sum_{i=1}^N r_i^*$. 
\end{theorem}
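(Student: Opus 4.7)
The plan is to leverage Lemma~\ref{lem4}, which already gives an explicit formula for the stationary state, and reduce the theorem to a spectral bound on the matrix $(\alpha \boldsymbol{\mathrm I} + \epsilon \overline L)^{-1}\alpha$ restricted to the subspace orthogonal to $\boldsymbol 1_N$.

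First, I would apply Lemma~\ref{lem4} at $p=n$ to write $\boldsymbol x^{n,*} = (\alpha \boldsymbol {\mathrm I}+\epsilon \overline L)^{-n}\alpha^n \boldsymbol r^*$. Next, using the basic Laplacian identity $\overline L \boldsymbol 1_N = \boldsymbol 0$, one sees $(\alpha \boldsymbol{\mathrm I} + \epsilon \overline L)\boldsymbol 1_N = \alpha \boldsymbol 1_N$ and hence $(\alpha \boldsymbol{\mathrm I} + \epsilon \overline L)^{-n}\alpha^n \boldsymbol 1_N = \boldsymbol 1_N$. Multiplying this identity by $\bar r^*$ and subtracting $\boldsymbol x^{n,*}$ yields the compact expression
\begin{equation*}
\bar r^* \boldsymbol 1_N - \boldsymbol x^{n,*} = -(\alpha \boldsymbol{\mathrm I} + \epsilon \overline L)^{-n}\alpha^n \widetilde {\boldsymbol r}^*,
\end{equation*}
so that the task is reduced to bounding the $\ell_2$-norm of $(\alpha \boldsymbol{\mathrm I} + \epsilon \overline L)^{-n}\alpha^n \widetilde{\boldsymbol r}^*$.

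The key observation is that $\widetilde{\boldsymbol r}^* = \boldsymbol r^* - \bar r^* \boldsymbol 1_N$ is orthogonal to $\boldsymbol 1_N$ by construction, since $\boldsymbol 1_N^{\mathrm T} \widetilde{\boldsymbol r}^* = \sum_{i=1}^N (r_i^* - \bar r^*) = 0$. Because $\overline L$ is symmetric, it admits an orthonormal eigenbasis $\{\boldsymbol v_1, \dots, \boldsymbol v_N\}$ with $\boldsymbol v_1 = \boldsymbol 1_N/\sqrt{N}$ associated with the zero eigenvalue (by Lemma~\ref{lem3}) and the remaining eigenvectors associated with $\lambda_{\overline L, 2} \le \dots \le \lambda_{\overline L, N}$. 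I would then expand $\widetilde{\boldsymbol r}^* = \sum_{i=2}^N c_i \boldsymbol v_i$ and note that $(\alpha \boldsymbol{\mathrm I} + \epsilon \overline L)^{-n}\alpha^n$ acts on each $\boldsymbol v_i$ by multiplication by $\bigl(\alpha/(\alpha + \epsilon \lambda_{\overline L, i})\bigr)^n$. Since the function $\lambda \mapsto \alpha/(\alpha + \epsilon \lambda)$ is decreasing in $\lambda \ge 0$, its supremum over $i \in \{2, \dots, N\}$ is attained at $i=2$.

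Combining the orthonormality of the eigenbasis with Parseval's identity, I obtain
\begin{equation*}
\bigl\|(\alpha \boldsymbol{\mathrm I} + \epsilon \overline L)^{-n}\alpha^n \widetilde{\boldsymbol r}^*\bigr\|_2^2 = \sum_{i=2}^N \left(\frac{\alpha}{\alpha + \epsilon \lambda_{\overline L, i}}\right)^{2n} c_i^2 \le \left(\frac{\alpha}{\alpha + \epsilon \lambda_{\overline L, 2}}\right)^{2n} \|\widetilde{\boldsymbol r}^*\|_2^2,
\end{equation*}
which upon taking square roots yields the claimed bound. The only subtlety worth highlighting is the necessity of projecting onto the $\boldsymbol 1_N^\perp$ subspace before taking the spectral bound: the full induced norm of $(\alpha \boldsymbol{\mathrm I} + \epsilon \overline L)^{-1}\alpha$ equals $1$ (because of the zero eigenvalue), so a naive submultiplicative bound would be vacuous; the decay rate $\alpha/(\alpha+\epsilon \lambda_{\overline L,2})$ only emerges after exploiting the orthogonality of $\widetilde{\boldsymbol r}^*$ to $\boldsymbol 1_N$. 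This is the place where Lemma~\ref{lem3} is essential, as it guarantees that $\lambda_{\overline L, 2} > 0$ and hence that the contraction factor is strictly less than one.
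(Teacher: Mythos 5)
Your proof is correct and follows essentially the same route as the paper: both invoke Lemma~\ref{lem4}, diagonalize $\overline L$ with $\boldsymbol v_1 = \boldsymbol 1_N/\sqrt N$, isolate the $\boldsymbol 1_N$ component, and bound the orthogonal remainder by the factor $\bigl(\alpha/(\alpha+\epsilon\lambda_{\overline L,2})\bigr)^n$. Your use of Parseval's identity makes the key inequality slightly more explicit than the paper's operator-norm step, but the argument is the same.
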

\begin{proof}
It follows from Lemma \ref{lem4} that
\begin{equation*}
\boldsymbol x^{p,*}=(\alpha \boldsymbol {\mathrm I}+\epsilon \overline L)^{-p}\alpha^p \boldsymbol r^*.
\end{equation*} 
The eigenvalues of the matrix $(\alpha \boldsymbol {\mathrm I}+\epsilon \overline L)^{-p}$ are given by
\begin{equation*}
\lambda_i'= \left (\frac {1}{\alpha+\epsilon \lambda_{\overline L, i}} \right)^p,
\end{equation*}
where $\lambda_{\overline L, i}$ denotes the $i$th eigenvalue of the matrix $\overline L$, and $\boldsymbol v_i$ is the corresponding eigenvector. The columns of the matrix $V=[\boldsymbol v_1, \boldsymbol v_2, \dots,\boldsymbol v_N]$ are orthonormal. Consequently, $\overline L$ can be expressed as $\overline L=V \Lambda_{\overline L}V^T=\sum_{i=1}^N \lambda_{\overline L,i} \boldsymbol v_i \boldsymbol v_i^T$, where $\Lambda_{\overline L}= \text{diag} \{\lambda_{\overline L, 1}, \lambda_{\overline L, 2}, \dots, \lambda_{\overline L, N}\}$ and $\boldsymbol v_1 = \frac {1}{\sqrt N} \boldsymbol 1_N$. Hence, 
\begin{equation}\label{eq5-24}
\begin{split}
\boldsymbol x^{p,*}=&(\alpha \boldsymbol {\mathrm I}+\epsilon \overline L)^{-p}\alpha^p \boldsymbol r^*\\
                   =&(\alpha V \boldsymbol {\mathrm I} V^T+\epsilon V \Lambda_{\overline L}V^T)^{-p} \alpha^p \boldsymbol r^*\\
                   =&(V(\alpha \boldsymbol {\mathrm I}+\epsilon \Lambda_{\overline L})V^T)^{-p}\alpha^p \boldsymbol r^*\\
                   =&\sum_{i=1}^N \left(\frac {1}{\alpha+\epsilon \lambda_{\overline L, i}} \right )^p \boldsymbol v_i \boldsymbol v_i^T \alpha^p \boldsymbol r^*\\
                   =&\sum_{i=1}^N \left(\frac {\alpha}{\alpha+\epsilon \lambda_{\overline L, i}} \right )^p \boldsymbol v_i \boldsymbol v_i^T \boldsymbol r^*\\
                   =&\sum_{i=2}^N \left(\frac {\alpha}{\alpha+\epsilon \lambda_{\overline L, i}} \right )^p \boldsymbol v_i \boldsymbol v_i^T \boldsymbol r^*+\boldsymbol v_1 \boldsymbol v_1^T \boldsymbol r^*\\
                   =&\sum_{i=2}^N \left(\frac {\alpha}{\alpha+\epsilon \lambda_{\overline L, i}} \right )^p \boldsymbol v_i \boldsymbol v_i^T \boldsymbol r^*+\frac {1}{N} \boldsymbol 1_{N \times N} \boldsymbol r^*\\
                   =&\sum_{i=2}^N \left(\frac {\alpha}{\alpha+\epsilon \lambda_{\overline L, i}} \right )^p \boldsymbol v_i \boldsymbol v_i^T \boldsymbol r^*+\bar r^* \boldsymbol 1_N.
\end{split}
\end{equation}
It follows from \eqref{eq5-24} that
\begin{equation*}
\begin{split}
& \quad \limsup_{k \to \infty} \|\bar {r}^* \boldsymbol 1_N-\boldsymbol x^{n,*} \|_2 \nonumber \\
&=\left \|\sum_{i=2}^N \left (\frac {\alpha}{\alpha+\epsilon \lambda_{\overline L, i}} \right )^n \boldsymbol v_i \boldsymbol v_i^T \boldsymbol r^* \right \|_2\\
                                          &\le \left (\frac {\alpha}{\alpha+\epsilon \lambda_{\overline L, 2}} \right )^n  \left \|\sum_{i=2}^N \boldsymbol v_i \boldsymbol v_i^T \boldsymbol r^* \right \|_2\\
                                          &\le \left (\frac {\alpha}{\alpha+\epsilon \lambda_{\overline L, 2}} \right )^n  \left \|\sum_{i=1}^N \boldsymbol v_i \boldsymbol v_i^T \boldsymbol r^* -\boldsymbol v_1 \boldsymbol v_1^T \boldsymbol r^* \right \|_2\\
                                          &\le \left (\frac {\alpha}{\beta} \right )^n \| \widetilde {\boldsymbol r}^* \|_2,
\end{split}
\end{equation*}
where $\widetilde {\boldsymbol r}^* :=(\boldsymbol r^*-\bar {r}^* \boldsymbol 1_N)$.
\end{proof}
On the one hand, Theorem~\ref{theor:conv} shows the possibility of achieving mean-squared DAT in the presence of input delay, reference noise, and packet-drops. As the stage number $n$ goes to infinity, the tracking error will approach zero, i.e, the control objective \eqref{eq3-5} will be achieved. On the other hand, a larger stage number $n$ will induce a higher communication/computation cost for the agents, indicating that there exists trade-off between the tracking error and communication/computation cost.

\section{Simulation}
\label{sec:sim}
In this section, a numerical example is presented to verify Theorem~\ref{theor:conv}. 

A system consisting of $N=4$ nodes is considered. The communication network topology is given in Fig.~\ref{fig:6-1}.
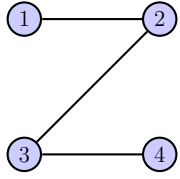
\begin{figure}[!htb]
\centering
\begin{tikzpicture}[
    -,>=stealth',shorten >=0.2pt,auto,node distance=3cm,thick,
    node/.style={scale=.6,circle,fill=blue!20,draw,font=\sffamily\Large\mdseries},
    function/.style={scale=2,rectangle,draw,fill=cyan!20,font=\sffamily\small\mdseries}]
    \node[node] (n1) {$1$};
    \node[node] [right of=n1](n2) {$2$} ;
    \node[node] [below of=n1](n3) {$3$} ;
    \node[node] [below of=n2](n4) {$4$} ;
    \draw[-] (n2)--(n1);
	\draw[-] (n3)--(n2);
	\draw[-] (n4)--(n3);
\end{tikzpicture}
\caption{The communication network topology}\label{fig:6-1}
\end{figure}
Assume that the packet-drop probability $p_{ij}=0.5$, $\forall \, (i,j) \in \mathcal{E}$. The actual network topology can vary from one of the eight network topologies shown in Fig.~\ref{fig:6-2} due to packet-drops. 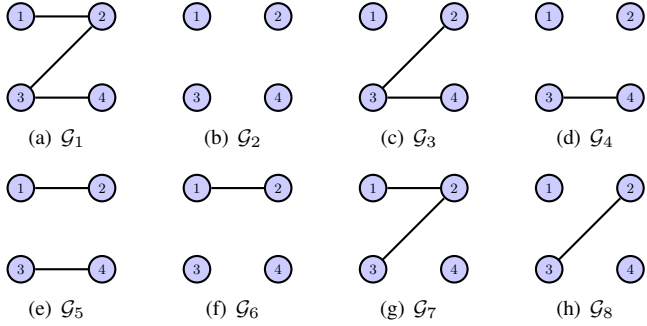
\begin{figure}[!htb]
\centering \subfigure[$\mathcal G_1$]{
\label{fig6-2:subfig:a}
\begin{tikzpicture}[
    -,>=stealth',shorten >=0.2pt,auto,node distance=1.8cm,thick,
    node/.style={scale=0.6,circle,fill=blue!20,draw,font=\sffamily\small\mdseries},
    function/.style={scale=2,rectangle,draw,fill=cyan!20,font=\sffamily\small\mdseries}]
    \node[node] (n1) {$1$};
    \node[node] [right of=n1](n2) {$2$} ;
    \node[node] [below of=n1](n3) {$3$} ;
    \node[node] [below of=n2](n4) {$4$} ;
    \draw[-] (n2)--(n1);
	\draw[-] (n3)--(n2);
	\draw[-] (n4)--(n3);
\end{tikzpicture}}
\hspace{0.25in}\subfigure[$\mathcal G_2$]{
\label{fig6-2:subfig:b}
\begin{tikzpicture}[
    -,>=stealth',shorten >=0.2pt,auto,node distance=1.8cm,thick,
    node/.style={scale=0.6,circle,fill=blue!20,draw,font=\sffamily\small\mdseries},
    function/.style={scale=2,rectangle,draw,fill=cyan!20,font=\sffamily\small\mdseries}]
    \node[node] (n1) {$1$};
    \node[node] [right of=n1](n2) {$2$} ;
    \node[node] [below of=n1](n3) {$3$} ;
    \node[node] [below of=n2](n4) {$4$} ;
\end{tikzpicture}}
\hspace{0.25in}\subfigure[$\mathcal G_3$]{
\label{fig6-2:subfig:c}
\begin{tikzpicture}[
    -,>=stealth',shorten >=0.2pt,auto,node distance=1.8cm,thick,
    node/.style={scale=0.6,circle,fill=blue!20,draw,font=\sffamily\small\mdseries},
    function/.style={scale=2,rectangle,draw,fill=cyan!20,font=\sffamily\small\mdseries}]
    \node[node] (n1) {$1$};
    \node[node] [right of=n1](n2) {$2$} ;
    \node[node] [below of=n1](n3) {$3$} ;
    \node[node] [below of=n2](n4) {$4$} ;
	\draw[-] (n3)--(n2);
	\draw[-] (n4)--(n3);
\end{tikzpicture}}
\hspace{0.25in}\subfigure[$\mathcal G_4$]{
\label{fig6-2:subfig:d}
\begin{tikzpicture}[
    -,>=stealth',shorten >=0.2pt,auto,node distance=1.8cm,thick,
    node/.style={scale=0.6,circle,fill=blue!20,draw,font=\sffamily\small\mdseries},
    function/.style={scale=2,rectangle,draw,fill=cyan!20,font=\sffamily\small\mdseries}]
    \node[node] (n1) {$1$};
    \node[node] [right of=n1](n2) {$2$} ;
    \node[node] [below of=n1](n3) {$3$} ;
    \node[node] [below of=n2](n4) {$4$} ;
	\draw[-] (n4)--(n3);
\end{tikzpicture}}
\vfill
\subfigure[$\mathcal G_5$]{
\label{fig6-2:subfig:e}
\begin{tikzpicture}[
    -,>=stealth',shorten >=0.2pt,auto,node distance=1.8cm,thick,
    node/.style={scale=0.6,circle,fill=blue!20,draw,font=\sffamily\small\mdseries},
    function/.style={scale=2,rectangle,draw,fill=cyan!20,font=\sffamily\small\mdseries}]
    \node[node] (n1) {$1$};
    \node[node] [right of=n1](n2) {$2$} ;
    \node[node] [below of=n1](n3) {$3$} ;
    \node[node] [below of=n2](n4) {$4$} ;
    \draw[-] (n2)--(n1);
	\draw[-] (n4)--(n3);
\end{tikzpicture}}
\hspace{0.25in}\subfigure[$\mathcal G_6$]{
\label{fig6-2:subfig:f}
\begin{tikzpicture}[
    -,>=stealth',shorten >=0.2pt,auto,node distance=1.8cm,thick,
    node/.style={scale=0.6,circle,fill=blue!20,draw,font=\sffamily\small\mdseries},
    function/.style={scale=2,rectangle,draw,fill=cyan!20,font=\sffamily\small\mdseries}]
    \node[node] (n1) {$1$};
    \node[node] [right of=n1](n2) {$2$} ;
    \node[node] [below of=n1](n3) {$3$} ;
    \node[node] [below of=n2](n4) {$4$} ;
    \draw[-] (n2)--(n1);
\end{tikzpicture}}
\hspace{0.25in}\subfigure[$\mathcal G_7$]{
\label{fig6-2:subfig:g}
\begin{tikzpicture}[
    -,>=stealth',shorten >=0.2pt,auto,node distance=1.8cm,thick,
    node/.style={scale=0.6,circle,fill=blue!20,draw,font=\sffamily\small\mdseries},
    function/.style={scale=2,rectangle,draw,fill=cyan!20,font=\sffamily\small\mdseries}]
    \node[node] (n1) {$1$};
    \node[node] [right of=n1](n2) {$2$} ;
    \node[node] [below of=n1](n3) {$3$} ;
    \node[node] [below of=n2](n4) {$4$} ;
    \draw[-] (n2)--(n1);
	\draw[-] (n3)--(n2);
\end{tikzpicture}}
\hspace{0.25in}\subfigure[$\mathcal G_8$]{
\label{fig6-2:subfig:h}
\begin{tikzpicture}[
    -,>=stealth',shorten >=0.2pt,auto,node distance=1.8cm,thick,
    node/.style={scale=0.6,circle,fill=blue!20,draw,font=\sffamily\small\mdseries},
    function/.style={scale=2,rectangle,draw,fill=cyan!20,font=\sffamily\small\mdseries}]
    \node[node] (n1) {$1$};
    \node[node] [right of=n1](n2) {$2$} ;
    \node[node] [below of=n1](n3) {$3$} ;
    \node[node] [below of=n2](n4) {$4$} ;
	\draw[-] (n3)--(n2);
\end{tikzpicture}}
\caption{The possible actual network topologies}\label{fig:6-2}
\end{figure}
Their corresponding Laplacian matrices are given respectively by
\begin{align}
	L_1 &=\left [
\begin{array}{cccc}
1&-1&0&0\\
-1&2&-1&0\\
0&-1&2&-1\\
0&0&-1&1
\end{array}
\right ], & L_2&=\left [
\begin{array}{cccc}
0&0&0&0\\
0&0&0&0\\
0&0&0&0\\
0&0&0&0
\end{array}
\right ] \nonumber \\
L_3&=\left [
\begin{array}{cccc}
0&0&0&0\\
0&1&-1&0\\
0&-1&2&-1\\
0&0&-1&1
\end{array}
\right ], &
L_4&=\left [
\begin{array}{cccc}
0&0&0&0\\
0&0&0&0\\
0&0&1&-1\\
0&0&-1&1
\end{array}
\right ] \nonumber \\
L_5&=\left [
\begin{array}{cccc}
1&-1&0&0\\
-1&1&0&0\\
0&0&1&-1\\
0&0&-1&1
\end{array}
\right ], &
L_6&=\left [
\begin{array}{cccc}
1&-1&0&0\\
-1&1&0&0\\
0&0&0&0\\
0&0&0&0
\end{array}
\right ]\nonumber \\
L_7&=\left [
\begin{array}{cccc}
1&-1&0&0\\
-1&2&-1&0\\
0&-1&1&0\\
0&0&0&0
\end{array}
\right ], &
L_8&=\left [
\begin{array}{cccc}
0&0&0&0\\
0&1&-1&0\\
0&-1&1&0\\
0&0&0&0
\end{array}
\right ]. \nonumber
\end{align}
The expected Laplacian matrix $\overline L$ is given by
\begin{equation*}
\overline L=\mathbb P(\widetilde{\mathcal G}=\mathcal G_1)L_1+\mathbb P(\widetilde{\mathcal G}=\mathcal G_2)L_2+ \dots+ \mathbb P(\widetilde{\mathcal G}=\mathcal G_8)L_8.
\end{equation*}
Fig.~\ref{fig:6-3} shows the expected network topology $\overline {\mathcal G}$, the network topology corresponding to $\overline L$.
\begin{figure}[!htb]
\centering
\begin{tikzpicture}[
    -,>=stealth',shorten >=0.2pt,auto,node distance=3cm,thick,
    node/.style={scale=.6,circle,fill=blue!20,draw,font=\sffamily\Large\mdseries},
    function/.style={scale=2,rectangle,draw,fill=cyan!20,font=\sffamily\small\mdseries}]
    \node[node] (n1) {$1$};
    \node[node] [right of=n1](n2) {$2$} ;
    \node[node] [below of=n1](n3) {$3$} ;
    \node[node] [below of=n2](n4) {$4$} ;
    \draw[-] (n2)--(n1) node [above, midway] {0.5};
	\draw[-] (n3)--(n2) node [above, midway] {0.5};
	\draw[-] (n4)--(n3) node [above, midway] {0.5};
\end{tikzpicture}
\caption{The expected network topology}\label{fig:6-3}
\end{figure}
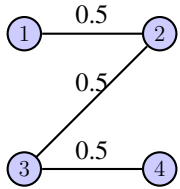
The reference signals are governed by \eqref{eq3-4}, with $\phi_i=0.01$ and $\psi_i=1$. Finally, the parameters are chosen as $\epsilon=\frac{1}{8}$, $\alpha=\frac{1}{2}$, and $k_x=k_r=\frac{1}{2}$. The input delay is set to $\tau=5$.

\begin{figure}[!htb]
\centering
\subfigure[n=10.]{
\label{fig:6-4:subfig:b} 
\includegraphics[scale=0.5]{}}
\hspace{1in} \subfigure[n=100.]{
\label{fig:6-4:subfig:c} 
\includegraphics[scale=0.5]{}}
\caption{Tracking error of the $n$th stage of $x_i(k)$ under the DAT algorithm \eqref{eq4-1}.}
\label{fig:6-4:subfig}
\end{figure}

Fig.~\ref{fig:6-4:subfig} shows the tracking error $\frac{1}{N}\sum_{i=1}^N r_i(k)-x_i^n(k)$ under the proposed algorithm embedded in system \eqref{eq4-5} without employing the Kalman filter and state predictor to handle the input delays, packet-drops, and noisy reference signals. It can be observed that the system cannot track the average of the reference signals; instead, the tracking error diverges eventually.
\begin{figure}[!htb]
\centering
\hspace{1in} \subfigure[n=10.]{
\label{fig6-5:subfig:b} 
\includegraphics[scale=0.5]{}}
\hspace{1in} \subfigure[n=100.]{
\label{fig6-5:subfig:c} 
\includegraphics[scale=0.5]{}}
\caption{Tracking error of the $n$th stage of $x_i(k)$ under the DAT algorithm \eqref{eq4-1}.}
\label{fig6-5:subfig}
\end{figure}

Fig.~\ref{fig6-5:subfig} shows the tracking error of the proposed DAT algorithm with the Kalman filter and state predictor for different stage number $n$. It can be seen that in both cases the tracking error will finally reach a steady value. Furthermore, as the stage number gets larger, the tracking error gets smaller, which is consistent with Theorem~\ref{theor:conv}.
\begin{figure}[!htb]
\centering
\hspace{1in} \subfigure[n=10.]{
\label{fig6-6:subfig:b} 
\includegraphics[scale=0.5]{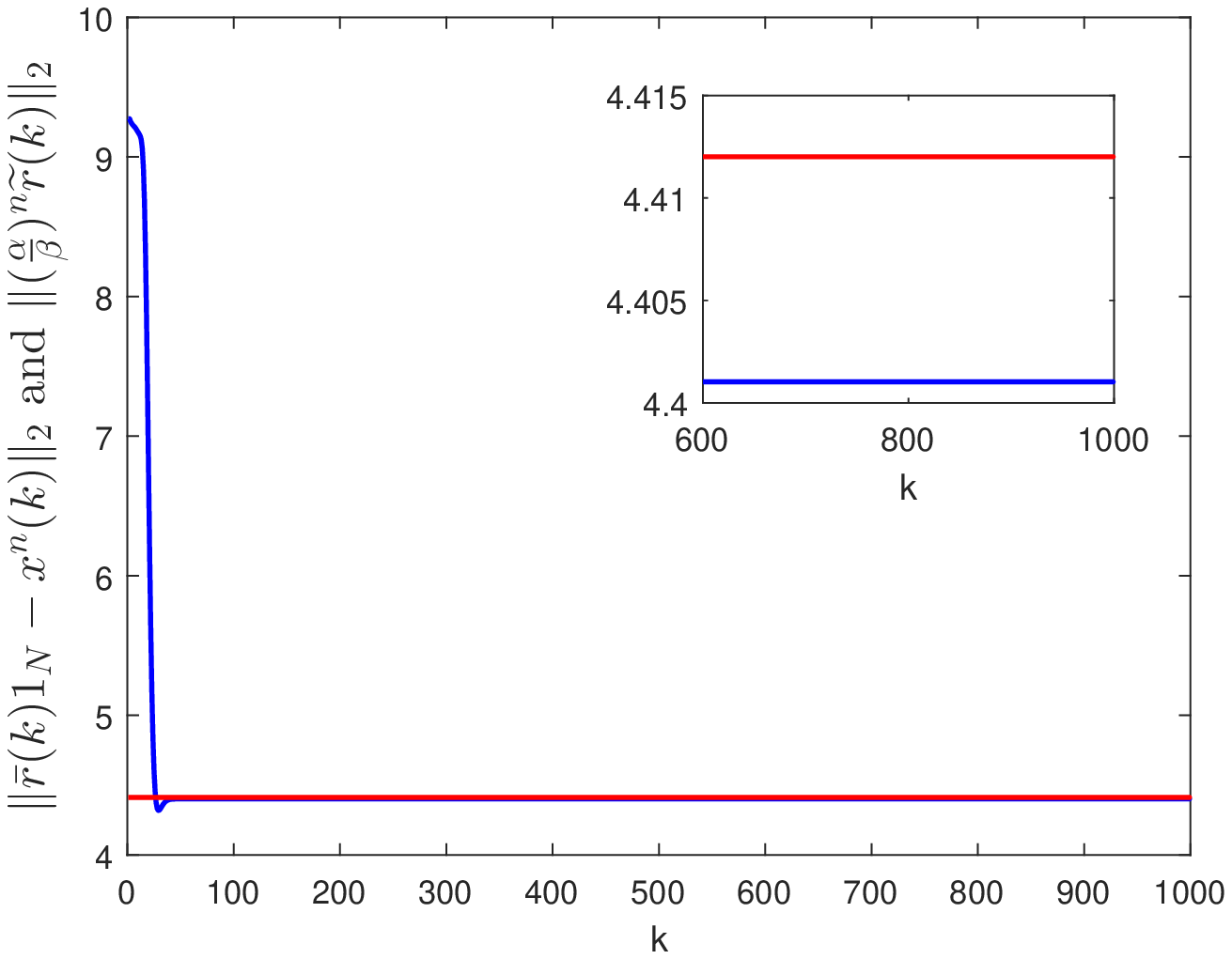}}
\hspace{1in} \subfigure[n=100.]{
\label{fig6-6:subfig:c} 
\includegraphics[scale=0.5]{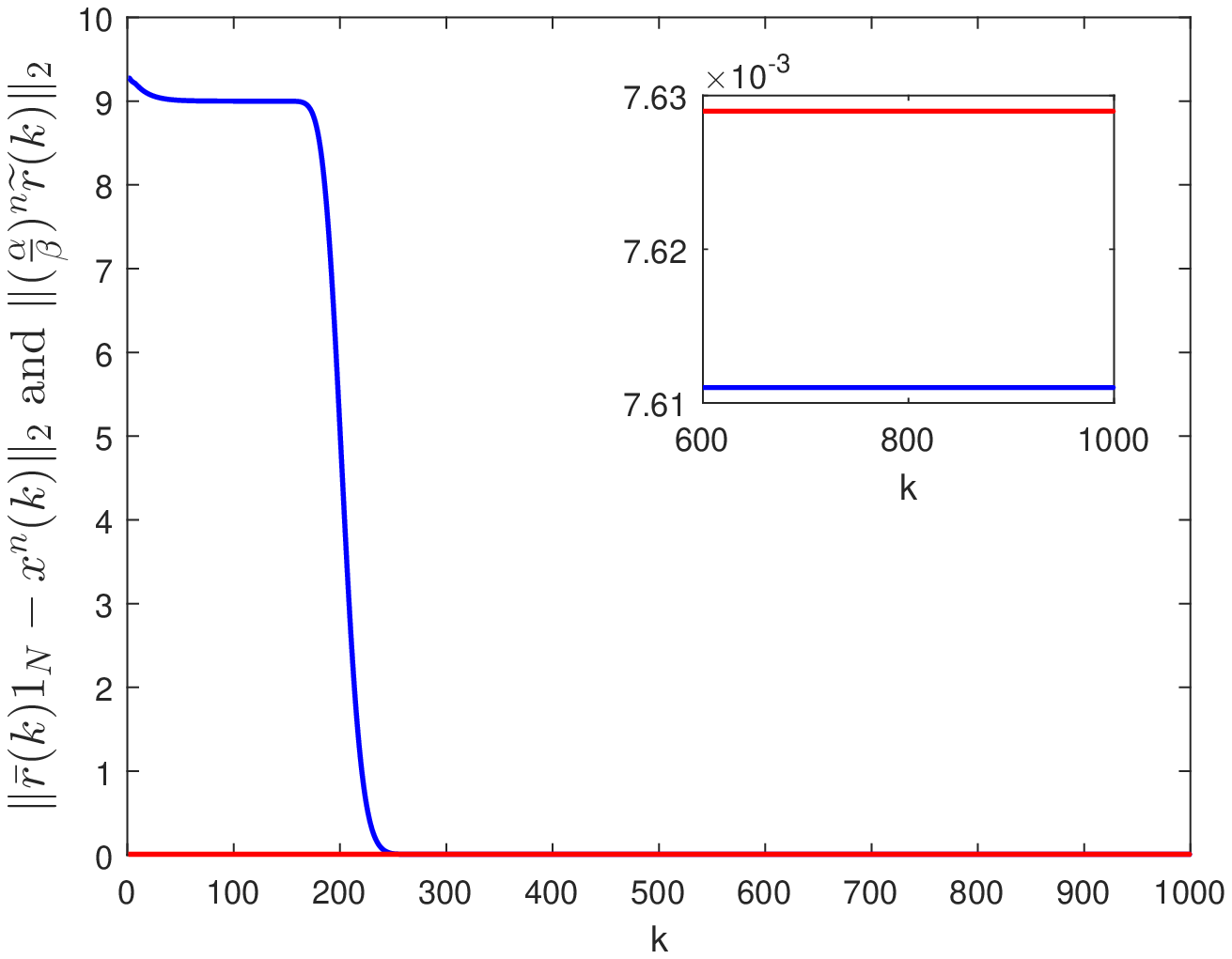}}
\caption{The trajectories of $\|\bar r(k) \boldsymbol 1_N-\boldsymbol x^n(k) \|_2$ and $\delta=\left (\frac {\alpha}{\beta} \right )^n \| \widetilde {\boldsymbol r}(k)\|_2$.}
\label{fig6-6:subfig}
\end{figure}

Fig.~\ref{fig6-6:subfig} shows respectively the trajectories of $\|\bar {r}(k) \boldsymbol 1_N-\boldsymbol x^n(k) \|_2$ and $\delta=\left (\frac {\alpha}{\beta} \right )^n \| \widetilde {\boldsymbol r}(k) \|_2$. The red line corresponds to the trajectory of $\left (\frac {\alpha}{\beta} \right )^n \| \widetilde {\boldsymbol r}(k) \|_2$, while the blue line to the trajectory of $\|\bar {r}(k) \boldsymbol 1_N-\boldsymbol x^n(k) \|_2$. It can be seen that as $k \to \infty$, the inequality $\limsup \|\bar {r}(k) \boldsymbol 1_N-\boldsymbol x^n(k) \|_2 \le \left (\frac {\alpha}{\beta} \right )^n \| \widetilde {\boldsymbol r}(k) \|_2$ holds eventually. 

\section{Conclusion}
\label{sec:concl}

This paper demonstrates that DAT algorithms can be seriously hampered by reference noise, packet-drops, and input delays; however, it is still possible to achieve practical DAT by employing appropriate control techniques, such as Kalman filtering and predictive control, to deal with those negative effects. 

In summary, the following statements are drawn from this work.
\begin{itemize}
	\item An explicit expression of the expected stationary states of the agents is obtained and given in terms of the expected values of the references, which also depends on the control gains as well as the number of processing stages.
	\item The mean-squared tracking error is ultimately upper bounded by the average difference among the reference signals, and as the number of stages goes to infinity, the tracking error will varnish, achieving practical DAT in the sense of mean square.
\end{itemize}
These results shed new lights on the studies of distributed average tracking and cooperative control of multi-agent systems.


\bibliographystyle{IEEEtran}
\bibliography{IEEEabrv,Mybibfile2}



\end{document}